\definecolor{commentgray}{gray}{0.5}
\crefname{section}{Section}{Sections}
\crefname{subsection}{Section}{Sections}
\crefname{figure}{Figure}{Figures}
\crefname{prop}{Prop.}{Props.}
\crefname{equation}{Eq.}{Eqs.}
\crefname{algorithm}{Algorithm}{Algorithms}
\crefname{table}{Table}{Tables}
\crefname{defn}{Def.}{Defs.}
\newcommand*{\ve}[1]{\bm{#1}}
\newcommand{\cT}{M}
\newcommand{\talpha}{\alpha}
\newcommand{\tl}{\ell'}
\newcommand{\RV}{V}
\newtheorem{defn}{Definition}
\newtheorem{prop}{Proposition}
\definecolor{plotblue}{HTML}{5a7daf}
\definecolor{plotgreengreen}{HTML}{8ed659}
\definecolor{plotgreen}{HTML}{8bc284}
\definecolor{plotbluegreen}{HTML}{88aeaf}
\definecolor{plotred}{HTML}{f08585}
\definecolor{backgray}{HTML}{f0f0f0}
\definecolor{plot1}{HTML}{5a7daf}
\definecolor{plot2}{HTML}{6596c6}
\definecolor{plot3}{HTML}{69b2ba}
\definecolor{plot4}{HTML}{7dc5b6}
\definecolor{plot5}{HTML}{c4e1ae}
\author{Marcel Campen}
\affiliation{%
  \institution{Osnabrück University}
  \country{Germany}}
\author{Ryan Capouellez}
\author{Hanxiao Shen}
\author{Leyi Zhu}
\author{Daniele Panozzo}
\author{Denis Zorin}
\affiliation{%
  \institution{NYU}
  \country{USA}}
\title{Efficient and Robust Discrete Conformal Equivalence with Boundary}
\begin{document}

\begin{abstract}
We describe an efficient algorithm to compute a conformally equivalent metric for a discrete surface, possibly with boundary, exhibiting prescribed Gaussian curvature at all interior vertices and prescribed geodesic curvature along the boundary. Our construction is based on the theory developed in  \cite{gu2018discrete2,gu2018discrete,springborn2019ideal}, and in particular relies on results on hyperbolic Delaunay triangulations. 
Generality is achieved by considering the surface's intrinsic triangulation as a degree of freedom, and particular attention is paid to the proper treatment of surface boundaries. 
While via a double cover approach the boundary case can be reduced to the closed case quite naturally, the implied symmetry of the setting causes additional challenges related to stable Delaunay-critical configurations that we address explicitly in this work.
\end{abstract}

\maketitle

\section{Introduction}

Let $\cT$ be a triangle mesh, equipped with a discrete Euclidean metric, i.e., an assignment of length to edges, satisfying triangle inequality, forming a surface with or without boundary. 
Under this metric, let $\alpha_{jk}^i$ be the angle at vertex $v_i$ in the 
triangle $T_{ijk}$. Let $\Theta_i = \sum_{T_{ijk}}
\alpha^i_{jk}$ 
be the total angle at vertex $v_i$.
Define $\kappa_i$ as the angle deficit at a vertex $v_i$, 
defined as $2\pi - \Theta_i$ for interior vertices  and $\pi - \Theta_i$ for boundary vertices.
This quantity can be viewed as the discrete Gaussian curvature if $v_i$ 
is an interior vertex and the geodesic curvature of the boundary if $v_i$ is on the 
boundary. 
 
Given \emph{target} curvatures $\hat\kappa_i$ (respecting the discrete Gauss-Bonnet theorem) one may ask for a discrete metric that exhibits exactly these curvatures. This is of practical interest, for instance, to obtain flattenings, i.e., surface parametrizations over the plane (by prescribing $\kappa_i = 0$ in the interior \cite{BenChen:2008}) or so-called seamless maps for quadrilateral remeshing (by prescribing $\hat\kappa_i = k_i\frac{\pi}{2}$ with $k_i \in \mathbb{Z}$ \cite{campen2018seamless,myles2012global}).
Such a metric always exists, and when restricting to metrics conformally equivalent to the original metric, it is unique (up to scale)
\footnote{For surfaces of non-trivial topology, the more general prescription of the metric's associated \emph{holonomy} or \emph{monodromy} is of interest. For this case similar statements hold for metrics with scale jumps (so-called \emph{similarity structures}) \cite{Campen:2017:SimilarityMaps}.}.

The conformal case is of particular relevance because, in principle, such a metric can be found via solving a convex optimization problem \cite{springborn2008conformal}. For a fixed triangulation, however, the triangle inequality limits how much $\hat\kappa$ may differ from $\kappa$ before the problem becomes infeasible \cite{springborn2008conformal}. By treating the surface's triangulation as variable, the problem can be made feasible in general,  as shown in  \cite{gu2018discrete,gu2018discrete2,springborn2019ideal}. 
The vertex set $V$ can be kept fixed, i.e., no refinement is necessary and intrinsic edge flips are sufficient to facilitate all required adjustments. A bijection between the original and the modified triangulation can easily be kept track of \cite{fisher2007algorithm}, so as to, e.g, in the end extend the computed conformal parametrization from the modified triangulation to (a~refinement~of) the original mesh.

Recent results \cite{gu2018discrete,gu2018discrete2,springborn2019ideal} indicate how these triangulation changes can be performed in a systematic manner. We discuss the relevant background (\cref{sec:background}) and describe an implementation of these ideas, with particular attention to practical aspects (\cref{sec:algo}), as well as a generalization to surfaces with boundary, which poses remarkable additional challenges (\cref{sec:boundary}).

\section{Related Work}

The problem of computing conformally equivalent metrics or, by implication, conformal maps of discrete surfaces has been considered in a variety of works before. As there is no useful natural notion of conformality in the discrete (non-smooth) setting, a range of discrete counterparts of the continuous concept of conformality have been proposed and used.

\paragraph{Static Triangulation}
Prominent examples of works addressing the computation of conformal metrics or maps, based on various definitions of discrete conformality, on discrete surfaces while considering their triangulation fixed are: \cite{springborn2008conformal,Gu:2003,BenChen:2008,BFF,levy2002least,soliman2018optimal,kharevych2006discrete,jin2007discrete,desbrun2002intrinsic}.

\paragraph{Dynamic Triangulation}
A fixed triangulation restricts the space of metrics that can be achieved. For instance, a vertex $v_i$ of valence $k$ cannot, under any (Euclidean) metric, exhibit a discrete curvature $\kappa_i \leq (2-k)\pi$, as inner angles are bounded by $\pi$. By adjusting the triangulation depending on the prescribed target curvature, this limitation can be remedied. Two systematic approaches have been proposed to that end, both conceptually considering a continuous metric evolution from initial state to target state. \cite{luo2004combinatorial} proposes to adjust the triangulation by an intrinsic edge flip whenever an edge becomes triangle inequality critical (\cref{fig:flips} left). Implementation variants are described in \cite{Campen:2017:SimilarityMaps,Campen:2017:OnSimilarityMaps,campen2018seamless}. Differently,
\cite{gu2018discrete,gu2018discrete2,springborn2019ideal} effectively consider the case of flipping an edge when it becomes Delaunay-critical (\cref{fig:flips} right).

\section{Background}
\label{sec:conformal}
\label{sec:background}

We begin by considering the case of surfaces \emph{without} boundary, i.e., we are given a closed manifold triangle mesh $\cT = (V,E,F)$; this is the setting considered 
in \cite{gu2018discrete} and other related work. 
The case of surfaces with boundary can be reduced to the closed surface case with an additional symmetry structure, as we show in \cref{sec:boundary}.

The mesh $\cT$ is equipped with
an input metric defined by
edge lengths $\ve{\ell} : E \rightarrow \mathbb{R}^{>0}$, 
satisfying the triangle inequality. 

\subsection{Conformal Equivalence}
A \emph{conformally equivalent} discrete metric is defined by means of \emph{logarithmic scale factors} $\ve{u} : V \rightarrow \mathbb{R}$ associated with vertices $V = (v_1,\dots,v_n)$, by defining new edge lengths as
\begin{equation}
    \ell_{ij}(\ve{u}) = \ell_{ij} \,e^{\frac{u_i+u_j}{2}} 
    \label{eq:barell}
\end{equation}
per edge $e_{ij}$ \cite{luo2004combinatorial}.
Given per-vertex
target angles $\hat\Theta_i$
a conformally equivalent metric exhibiting these is characterized by, for all~$i$:
\begin{equation}
    g_i(\ve{u}) := \hat{\Theta}_i - \Theta_i(\ve{u}) = \hat{\Theta}_i -  \sum_{T_{ijk}} \talpha^i_{jk}(\ve{u}) = 0,
    \label{eq:flattening1}
\end{equation}
where the inner angle $\talpha^i_{jk}(\ve{u})$ is computed under the metric defined by $\ve{u}$ via \cref{eq:barell} (i.e. from edge lengths $\ve{\ell(u)}$).

It is known that $\ve{g}(\ve{u}) = (g_1(\ve{u}), \dots, g_n(\ve{u}))$ is the gradient of a twice-differentiable convex function \cite{springborn2008conformal}. Hence, one may constructively yield factors $\ve{u}$ satisfying \cref{eq:flattening1} using (second-order) convex optimization methods, starting from arbitrary initializations (e.g. $\ve{u}\equiv \ve{0}$). This is true, however, only as long as $\ve{u}$ stays in the feasible region $\Omega \subset \mathbb{R}^n$ where $\ve{\ell(u)}$ respects the triangle inequality for each triangle~$T_{ijk}$; otherwise it does not well-define a Euclidean surface metric on $\cT$.

\subsection{Dynamic Triangulation}
\label{sec:dyn}

\begin{figure}[t]
    \centering
      \begin{overpic}[width=.99\linewidth]{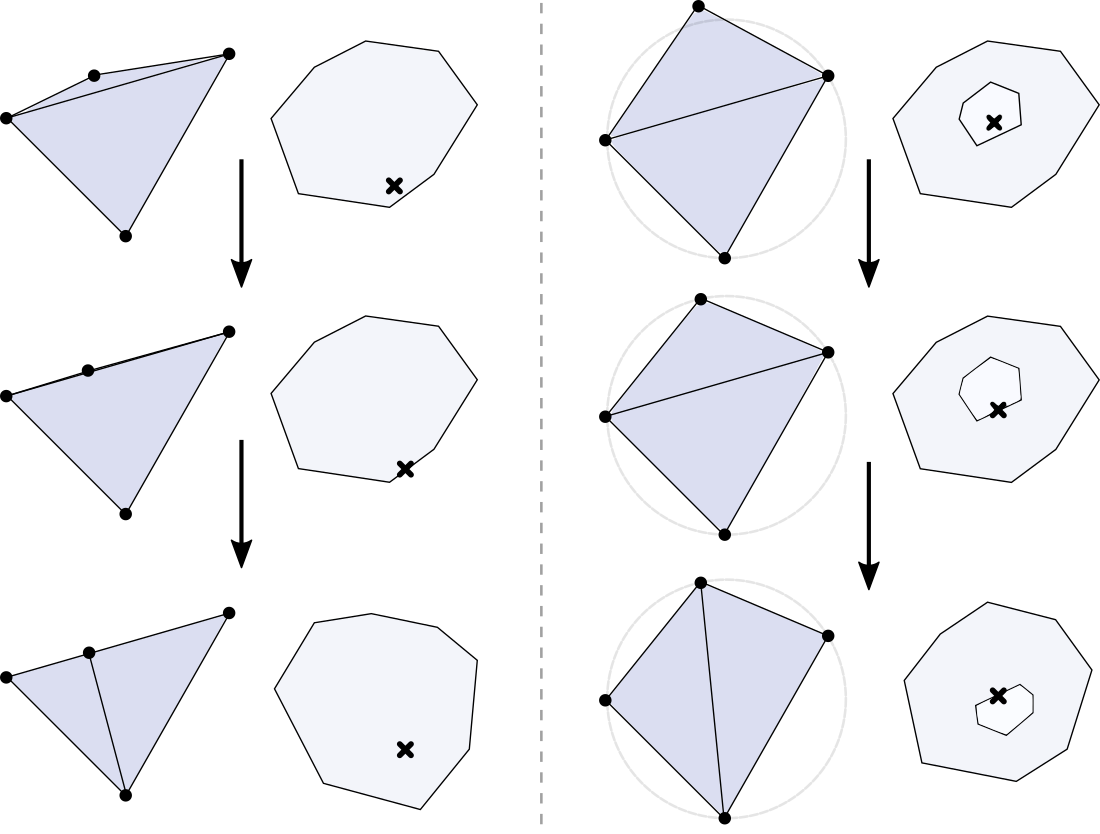}
    \small
    \put(32.5,39){$\Omega$}
    \put(89.4,34){$\Omega$}
    \put(89.1,40.2){$\Delta$}
    \put(34,60.5){$\ve{u}$}
    \end{overpic}
    \caption{Left: flip-on-degeneration. Right: flip-on-Delaunay-violation. Alongside a conceptual illustration of the valid region $\Omega$ (light blue) and Delaunay region $\Delta$ (white) is shown (cf.~\cref{sec:dyn}), containing the current point $\ve{u}$ (cross mark) and changing due to the flip.}
    \label{fig:flips}
\end{figure}

The feasible region $\Omega$ can be altered by choosing a different triangulation of the same surface. There may not be a common triangulation, though, on which both $\ve{\ell} \equiv \ve{\ell}(\ve{0})$ and $\ve{\ell}(\ve{u}^*)$ for the sought $\ve{u}^*$ (satisfying \cref{eq:flattening1}) are valid. Rather, the triangulation needs to be adjusted dynamically \emph{during} the evolution of $\ve{u}$ from $\ve{0}$ towards $\ve{u}^*$.

Note that a change of triangulation is possible without \emph{intrinsically} changing the surface. $\cT$ together with given edge lengths defines a surface $S_V$ with a metric which is flat everywhere except at $V$. There are many triangulations (besides $\cT$) with vertices $V$ and their own associated edge lengths, defining the same surface $S_V$ (cf.~\cite{sharp2019navigating}); hence the differentiation between $\cT$ and $S_V$. In particular, an edge flip replacing a pair of triangles  $(T_{ijk}, T_{jim})$ sharing an edge $e_{ij}$, with triangles $(T_{kim}, T_{mjk})$ sharing edge $e_{km}$ can be performed without intrinsically changing the surface $S_V$, by setting the length of the new edge $e_{km}$ to the length of the diagonal of the planar quadrilateral obtained by unfolding $T_{ijk}, T_{jim}$ \cite{fisher2007algorithm}. This is referred to as \emph{intrinsic flip}.

\paragraph{Delaunay Flips}
\label{sec:delflips}
 \cite{gu2018discrete,springborn2019ideal} propose to dynamically adjust the triangulation such that it is (intrinsically) Delaunay at all times as $\ve{u}$ evolves.

\begin{defn}[Intrinsic Delaunay]
A triangulation is intrinsic Delaunay
if any two triangles $T_{ijk}$ and $T_{jim}$
 sharing an edge $e_{ij}$ satisfy the Delaunay condition:
 \begin{equation}
    \cos\alpha'^k_{ij} + \cos\alpha'^m_{ij} \geq 0
    \label{eq:delaunay}
\end{equation}
where $\alpha'^k_{ij}$ and $\alpha'^\ell_{ij}$ are triangle angles opposite edge $e_{ij}$.
Expressed directly in terms of edge lengths this condition is equivalent to
 \begin{equation}
 \frac{\ell'^2_{jk}+\ell'^2_{ki}-\ell'^2_{ij}}{\ell'_{jk}\ell'_{ki}} + \frac{\ell'^2_{jm}+\ell'^2_{mi}-\ell'^2_{ij}}{\ell'_{jm}\ell'_{mi}} \geq 0.
 \label{eq:delaunay_ell}
 \end{equation}
\end{defn}
In the context at hand these angles or lengths are to be understood as dependent on $\ve{u}$; we use short-hands $\alpha' = \alpha(\ve{u})$ and $\ell' = {\ell}(\ve{u})$.

Generically (iff these weak inequalities hold strictly), the intrinsic Delaunay triangulation is unique, but for special configurations (four or more intrinsically co-circular vertices resulting in equality in \cref{eq:delaunay_ell}) it is not.

For a given triangulation, let $\Delta \subset \mathbb{R}^n$ (referred to as \emph{Penner cell}) denote the region of factors $\ve{u}$ such that the triangulation is intrinsic Delaunay. Clearly, $\Delta \subset \Omega$, and when $\ve{u} \in \partial\Delta$ the Delaunay triangulation is not unique. Whenever $\ve{u}$ reaches the boundary of $\Delta$, we can switch to another Delaunay triangulation by means of an intrinsic flip, thereby changing the region $\Delta$ (and $\Omega$), enabling $\ve{u}$ to evolve further without leaving $\Delta$. \cref{fig:flips} right illustrates this behavior. Remarkably, these cells form a partition of $\mathbb{R}^n$. 

This can be formalized by the following definition of discrete conformal equivalence of two metrics \cite{gu2018discrete}: 

\begin{defn}[Discrete Conformal Equivalence]
Two metrics $(\cT_1, \ell_1)$ and $(\cT_m,\ell_m)$ are \emph{discretely conformally equivalent}, if 
there is a sequence of meshes with the same vertex set,  $(\cT_s,\ell_s$), $s=1,\ldots,m$, 
such that, for each $s$, $\cT_s$ is an intrinsic Delaunay triangulation for the metric $\ell_s$ 
and either 
\begin{itemize}
\item $(\cT_s,\ell_s)$ and $(\cT_{s+1},\ell_{s+1})$ are different metrics with the same triangulation (i.e., $\cT_s$ = $\cT_{s+1}$) and the edge lengths are related by \cref{eq:barell}
for a choice of $u_s: V \rightarrow \mathbb{R}$.
\item $(\cT_s,\ell_s)$ and $(\cT_{s+1},\ell_{s+1})$ are different Delaunay triangulations for the same metric.
\end{itemize}
\end{defn}

\paragraph{Degeneration Flips}
\label{sec:degenflips}
The alternative of 
performing a triangulation change only when $\ve{u}$ reaches the boundary $\partial\Omega$ of the currently feasible region was considered by \cite{luo2004combinatorial}.
This occurs when a triangle becomes a degenerate cap. An intrinsic flip of this triangle's longest edge yields a non-degenerate triangulation, effectively changing the valid region $\Omega$ such that $\ve{u}$ lies strictly in its interior. \cref{fig:flips} left illustrates this. An implementation of this approach is described and applied in \cite{Campen:2017:SimilarityMaps}.\medskip

At first sight, the approach based on maintaining an intrinsic Delaunay triangulation may seem inefficient in comparison. Due to $\Delta \subset \Omega$, at least as many, but often many more cells $\Delta$ need to be traversed.
Practically, this suggests a large number of small steps between flips in the process of optimizing $\ve{u}$, compared to, e.g., the use of (less frequent) degeneration flips, and much smaller steps compared to typical unconstrained optimization. 
Remarkably, however, this Delaunay approach permits an implementation that is in general more efficient and more robust (see \cref{sec:res:comparison} for a comparison).
In essence, exploiting a relation to \emph{hyperbolic} Delaunay triangulation, arbitrarily large steps can be made, beyond $\Delta$ and even beyond $\Omega$ (unconstrained by Euclidean triangle inequalities). Flips can be performed collectively \emph{after the fact} and in \emph{arbitrary order}. This is detailed in \cref{sec:hyp}.

\subsection{Evolution Step}
Assume we are given a triangulation $M$ that is intrinsic Delaunay under the metric defined by some $\ve{u}_\vdash$.
Consider a linear evolution of $\ve{u}$ from $\ve{u}_\vdash$ to $\ve{u}_\dashv$:
$$\ve{u}(t) = (1-t)\ve{u}_\vdash +t \ve{u}_\dashv,\; t\in[0,1].$$
As we move along the interval $[0,1]$, whenever four vertices
forming triangles $T_{ijk}$ and $T_{jim}$ become co-circular 
under the metric defined by $\ve{\ell}(\ve{u}(t))$, an intrinsic flip of edge $e_{ij}$ is performed. Due to the special configuration (the two triangles forming an \emph{inscribed} quadrilateral, see \cref{fig:ptolemy}) the length that the new edge $e_{km}$ needs to take can be computed following Ptolemy's theorem as 
\begin{equation}
\tl_{km}= \frac{1}{\tl_{ij}}(\tl_{jk} \tl_{im} +  \tl_{ki}\tl_{mj}),
\label{eq:ptolemy}
\end{equation}
where we use $\ve{\tl}$ as a shorthand for $\ve{\ell}(\ve{u}(t))$. For $\tl_{km} = \ell_{km}(u(t)) = \ell_{km}\,e^\frac{u_k+u_m}{2}$ to take this value for the current $\ve{u}(t)$, we need to set $\ell_{km}$ accordingly. If we plug \cref{eq:barell} into \cref{eq:ptolemy}, 
\begin{equation}
    \tl_{km}= \frac{1}{\ell_{ij}}(\ell_{jk} \ell_{im}+ \ell_{ki}\ell_{mj}) e^{(u_k + u_m)/2},
\notag
\end{equation}
we see that we need to set
\begin{equation}
\ell_{km} := \frac{1}{\ell_{ij}}(\ell_{jk} \ell_{im}+ \ell_{ki}\ell_{mj}).
\label{eq:ptolemy-orig}
\end{equation}
Notice that this is Ptolemy's formula, \cref{eq:ptolemy}, applied to the \emph{original} metric. In other words: applying the formula in the current ($\ve{u}(t)$-scaled) metric $\ve{\tl}$ is equivalent to applying it in the original metric $\ve{\ell}$, followed by scaling. Remarkably, this holds \emph{even though the vertices are not co-circular under the original metric} in general. Moreover, the edge lengths $\ve{\ell}$ set in this way may \emph{not even satisfy the triangle inequality}. This is no issue, though, as certainly the relevant scaled lengths $\ve{\tl}=\ve{\ell}(\ve{u}(t))$ do, by construction.

It was shown that the number of flip events along the path is finite \cite{WuFinite}, which means that after a finite number of flips we will 
obtain the triangulation and edge length assignment needed for the target $\ve{u}(1) = \ve{u}_\dashv$.

One practical downside of this procedure, in which the necessary flips along the evolution path are detected and performed one-by-one
sequentially, is that it requires solving precisely for the sequence of flips. This makes it inefficient as well as potentially numerically challenging.
The following \emph{hyperbolic} approach, whose correctness can be shown based on an interpretation of the involved edge lengths as defining hyperbolic metrics instead of Euclidean metrics, improves on this.

\begin{figure}[tb]
    \centering
        \vspace{0.3cm}
      \begin{overpic}[width=.89\linewidth]{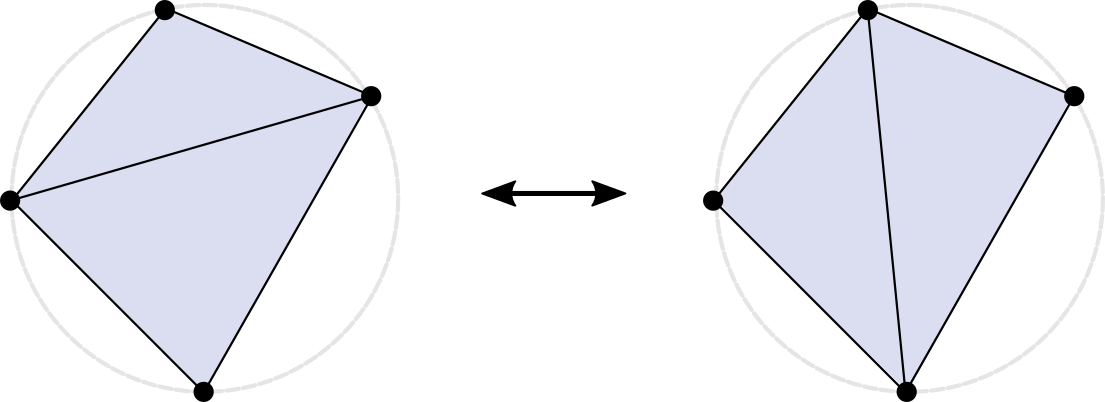}
    \small
    \put(15.5,19.3){$\ell_{ij}$}
    \put(81.2,22.0){$\ell_{km}$}
    
    \put(19.5,29.0){$\ell_{jk}$}
    \put(9.9,26.0){$\ell_{ki}$}
    \put(10.0,10.5){$\ell_{im}$}
    \put(18,13.0){$\ell_{mi}$}
    
    \put(83.5,29.0){$\ell_{jk}$}
    \put(71.9,24.5){$\ell_{ki}$}
    \put(71.1,12.9){$\ell_{im}$}
    \put(83.1,15.0){$\ell_{mi}$}
    
    \put(-5,18.0){$v_i$}
    \put(35.8,27.1){$v_j$}
    \put(12,38){$v_k$}
    \put(16.5,-2.5){$v_m$}
    
    \put(58.5,18.0){$v_i$}
    \put(99.3,27.1){$v_j$}
    \put(75.5,38){$v_k$}
    \put(80,-2.5){$v_m$}
    \end{overpic}
    \vspace{0.2cm}
    \caption{Ptolemy flip of an edge $e_{ij}$ shared by two triangles forming an inscribed quadrilateral, i.e., a Delaunay-critical edge.}
    \label{fig:ptolemy}
\end{figure}

\subsection{Hyperbolic Metric Approach}
\label{sec:hyp}

Instead of moving $t$ along the interval $[0,1]$, determining the sequence of flip events and executing them in order, let us directly consider $t=1$.
The initial triangulation $M$ may not be Delaunay under $\ve{u}(1)$, and the edge lengths $\ve{\ell}(\ve{u}(1))$ may not even respect the triangle inequality.
Nevertheless, we can test each edge for violation of the Delaunay criterion using \cref{eq:delaunay_ell} applied to $\ve{\ell}(\ve{u}(1))$, and incrementally flip (using \cref{eq:ptolemy-orig}) all violating edges in arbitrary order following the classical flip algorithm until a Delaunay triangulation is reached \cite{bobenko2007discrete}. While in case of triangle inequality violations this criterion lacks the geometric justification via \cref{eq:delaunay} (the involved quantities are no longer cotangents of Euclidean angles), this algorithm succeeds anyway; this is based on a remarkable relation between conformal metric equivalence and hyperbolic isometries 
used in the constructions of \cite{gu2018discrete} and \cite{springborn2019ideal}. 

In summary, instead of performing flips following an expensive-to-compute sequence required to maintain a valid Euclidean metric on triangles at all times, the algorithm performs the flips in arbitrary order, yielding edge lengths $\ve{\tl}$ satisfying the triangle inequality only in the end. 

\paragraph{Hyperbolic Delaunay.} The reasons for applicability of \cref{eq:delaunay_ell} and use of \cref{eq:ptolemy-orig} are direct consequences of an elegant correspondence between hyperbolic and conformal metric structures used in the proofs of \cite{gu2018discrete,springborn2019ideal} and introduced in \cite{rivin1994euclidean}.  
We refer the readers to the detailed explanations in these papers and to the overview given in \cite[§5, §6]{Crane:2020:DCG}. Here, we state only three essential properties of the hyperbolic metric, without defining it explicitly, to provide some intuition for the algorithm's validity.
\begin{itemize}
    \item[(1)] For a given triangulation, a suitable hyperbolic metric is defined for any choice of edge lengths, not just choices of lengths satisfying the Euclidean triangle inequality.
    \item[(2)] The Delaunay property is well-defined for hyperbolic metrics, and can be tested using \cref{eq:delaunay_ell}. A flipped edge's length preserving the metric is given by the Ptolemy relation, and the classical flip algorithm is guaranteed to terminate.
    \item[(3)] If a triangulation is Delaunay in a hyperbolic metric, the corresponding Euclidean edge length assignment satisfies the triangle inequality and is also Delaunay. 
\end{itemize}

\section{Algorithm}
\label{sec:algo}

\begin{algorithm}[b]
\setstretch{0.9}
\SetAlgoLined
\DontPrintSemicolon
\SetKwInOut{Input}{Input}
\SetKwInOut{Output}{Output}
\SetKwProg{Fn}{Function}{:}{}
\SetKwRepeat{Do}{do}{while}
\SetKw{Not}{not}
\Input{
    triangle mesh $\cT = (V,E,F)$, closed, manifold\newline
    edge lengths $\ve{\ell} > 0$ satisfying triangle inequality\newline
    target angles $\ve{\hat\Theta} > 0$ respecting Gauss-Bonnet}
    \vspace{2pt}
\Output{
    triangle mesh $\cT' = (V,E',F')$\newline
    edge lengths $\ve{\ell}' > 0$ satisfying triangle inequality\newline
    such that $\|\ve{\Theta}_{(\cT',\ve{\ell}')} - \ve{\hat\Theta}\|_\infty \leq \varepsilon_\text{tol}$}
    \vspace{2pt}
\Fn{{\scshape{FindConformalMetric}}$(\cT, \ve{\ell}, \ve{\hat\Theta})$}
{
 $\ve{u} \gets \ve{0}$\;
 $(\cT, \ve{\ell}) \gets$ \scshape{MakeDelaunay}$(\cT, \ve{\ell}, \ve{u})$\;
 \While{\Not \scshape{converged}$(\cT, \ve{\ell}, \ve{u})$}{
   $\ve{g} \gets g(\cT, \ve{\ell}, \ve{u})$\tcp*{gradient}
   $H \gets H(\cT, \ve{\ell}, \ve{u})$\tcp*{Hessian}
   $\ve{d}\gets -H^{-1}\ve{g}$\tcp*{Newton direction}
   $(\cT, \ve{\ell},\ve{u})\gets$ \scshape{LineSearch}$(\cT, \ve{\ell}, \ve{u}, \ve{d})$\tcp*{Newton step}
 }
  $\ve{\ell}' \gets $\scshape{ScaleConformally}$(\cT, \ve{\ell}, \ve{u})$\;
 \Return $(\cT,\ve{\ell}')$\;
}

\vspace{4pt}

\Fn{\scshape{LineSearch}$(\cT, \ve{\ell}, \ve{u}, \ve{d})$}
{
    \While{true}{
    $(\cT, \ve{\ell}) \gets$ \scshape{MakeDelaunay}$(\cT, \ve{\ell}, \ve{u}+\ve{d})$\;  
   \If{$\langle\ve{d}, g(\cT, \ve{\ell}, \ve{u}+\ve{d})\rangle \leq 0$}
    { 
    \Return $(\cT, \ve{\ell},\ve{u}+\ve{d})$\;
    }
    $\ve{d} \gets \frac{1}{2}\ve{d}$\tcp*{backtracking line search}}
}

\vspace{4pt}

\Fn{\scshape{converged}$(\cT, \ve{\ell}, \ve{u})$}
{
  \Return $\|\ve{\hat\Theta} - \Theta(\cT,\ve{\ell}')\|_\infty \leq \varepsilon_\text{tol}$\;
}

\vspace{4pt}

\Fn{$g$ $(\cT, \ve{\ell}, \ve{u})$}
{
  \Return $\ve{\hat\Theta} - \Theta(\cT,\ve{\ell}')$\tcp*{\cref{eq:flattening1}}
}

\vspace{4pt}

\Fn{$H$ $(\cT, \ve{\ell}, \ve{u})$}
{
  \Return $\text{\scshape CotanLaplacian}(\cT,\ve{\ell}')$\;
}

\vspace{4pt}

\Fn(\tcp*[f]{angle computation}){$\Theta(\cT, \ve{\ell}, \ve{u})$}
{
  \For(\textbf{\tcp*[f]{\cref{eq:flattening1}}}){$v_i \in V$}
    {$\Theta_i \gets \sum_{T_{ijk}\in\cT'} \arccos{\left((\ell_{ij}'^2 + \ell_{ki}'^2 - \ell_{jk}'^2) / (2\ell'_{ij}\ell'_{ki})\right)}$\;}
  \Return $(\Theta_0,\dots,\Theta_n)$
}

\vspace{4pt}

\Fn{\scshape{MakeDelaunay}$(\cT, \ve{\ell}, \ve{u})$}
{
  \While{{\scshape{NonDelaunay}}$(\cT,\ve{\ell},\ve{u},e_{ij})$ \text{for any edge} $e_{ij}$}
    {$(\cT,\ve{\ell}) \gets \text{\scshape{PtolemyFlip}}(\cT,\ve{\ell}, e_{ij})$\;}
  \Return $(\cT,\ve{\ell})$\;
}

\vspace{4pt}

\Fn{\scshape{NonDelaunay}$(\cT, \ve{\ell}, \ve{u}, e_{ij})$}
{
  \Return $(\ell_{jk}'^2+\ell_{ki}'^2-\ell_{ij}'^2)/(\ell_{jk}'\ell_{ki}')\newline \phantom{xxx} +(\ell_{jm}'^2+\ell_{mi}'^2-\ell_{ij}'^2)/(\ell_{jm}'\ell_{mi}') < 0$\tcp*{\cref{eq:delaunay_ell}}
}

\vspace{4pt}

\Fn{\scshape{PtolemyFlip}$(\cT,\ve{\ell}, e_{ij})$}
{
  $\cT \gets$ \scshape{Flip}$(\cT, e_{ij})$\;
  $\ell_{km} \gets (\ell_{jk}\ell_{im}+\ell_{ki}\ell_{mj})/\ell_{ij}$\tcp*{\cref{eq:ptolemy-orig}}
  \Return $(\cT, \ve{\ell})$\;
}

 \caption{{\scshape FindConformalMetric}\label{alg:Newton}}
\end{algorithm}

As shown in \cite{gu2018discrete}, assuming an intrinsic Delaunay triangulation, $\ve{g}(\ve{u})$, see \cref{eq:flattening1}, is the gradient of a twice-differentiable, convex function $E(\ve{u}): \RV \rightarrow \mathbb{R}$, defined for arbitrary values of factors $\ve{u} \in \RV$.
Moreover, the  Hessian $H(\ve{u})$ of $E(\ve{u})$ is given by the standard discrete Laplacian (cotangent matrix, positive semi-definite) computed in the scaled metric given by $\ve{\ell}'=\ve{\ell}(\ve{u})$. Hence, the problem of finding the desired solution $\ve{u}^*$ reduces to the problem of minimization of a convex function, for which many methods with guaranteed convergence are known.

\paragraph{Newton's Method}
Newton's second-order optimization method is one suitable example. In \cref{alg:Newton} we spell it out, tailored to our purpose.
Most noteworthy is the fact that whenever $\ve{u}$ is updated (as initialization and during the line search), the triangulation is turned into a Delaunay triangulation with respect to the metric defined by~$\ve{u}$ through edge flipping. As discussed in \cref{sec:background}, this yields the same result as if one had performed edge flips in the specific sequence encountered during a continuous evolution of $\ve{u}$ from the previous to the updated state. Only then, values such as $\ve{g}(\ve{u})$ or $H(\ve{u})$ are computed on the mesh.

Note that while the function $E(\ve{u})$ is known explicitly, we entirely avoid using it in the algorithm. This is because the expressions that need to be evaluated for
$E(\ve{u})$ happen to be more complex and numerically less stable than the simple angle expressions needed for the gradient~$\ve{g}(\ve{u})$. Note that the use of $\ve{g}(\ve{u})$ in the line search condition is possible due to the convexity of $E(\ve{u})$.

\subsubsection*{Numerics}
The accuracy with which the target angles $\ve{\hat\Theta}$ can be matched of course depends (in a non-trivial manner) on the precision of the employed number type. If tolerance $\varepsilon_\text{tol}$ is chosen too low relative to this, \cref{alg:Newton} may never terminate. For practical purposes therefore additional stopping criteria can be taken into account: an upper bound on the number of Newton steps and the number of line search halvings, a lower bound on the Newton decrement $\langle\ve{d}, g(\cT, \ve{\ell}, \ve{u})\rangle$. Information about the practically achievable accuracy can be found in \cref{sec:res:prec}.

\section{Boundaries} 
\label{sec:boundary}

In the above we assumed $\cT$ to form a closed surface. For surfaces with boundary, we can reduce the problem to the case of closed 
surfaces. 

\subsection{Double Cover}
\label{sec:doublecover}

This reduction is achieved by means of a double cover approach:
\begin{enumerate}
    \item we attach a mirrored copy $N'$ of the input mesh $N$ along the boundary (merging boundary vertices and edges), as illustrated below, yielding a closed mesh $\cT$,
    \item we transfer the edge lengths $\ve{\ell}$ and the target curvatures $\kappa_i$ of interior vertices $v_i$ from $N$ to $N'$, 
    \item we prescribe $\hat\Theta_i = 2\pi -2\hat\kappa_i$ at each (former) boundary vertex $v_i$, where $\kappa_i$ is the target discrete geodesic boundary curvature at vertex $v_i$.
\end{enumerate}

\begin{wrapfigure}[8]{r}{0.25\linewidth}
\vspace{-4.0mm}
\hspace{-0.5cm}
\begin{overpic}[width=1.2\linewidth]{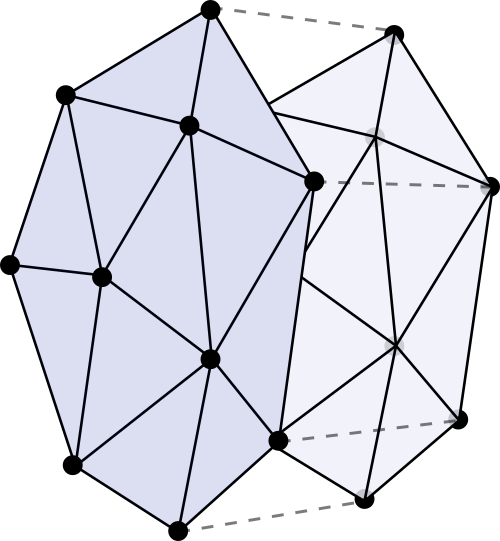}
\end{overpic}
\end{wrapfigure}
The double cover mesh $\cT$ built this way exhibits an obvious
reflectional symmetry, i.e., there is a map $R$ with $R^2 = I$ that takes vertices to vertices, edges to edges, and faces to faces. It maps an element stemming from the interior of $N$ to its copy in $N'$ and vice versa; on the merged (former) boundary vertices and edges, $R$ is the identity.

\paragraph{Conformal Metric Symmetry}
Due to symmetry (invariance with respect to $R$) of the mesh $\cT$, the metric $\ve{\ell}$, and the target angles $\ve{\hat\Theta}$, the symmetrically initialized factors $\ve{u}$ will remain symmetric after each iteration of the  optimization process (up to round-off error).  This can be seen by observing that the function $E(\ve{u})$ is the sum of per-triangle terms $E_T(\ve{u}_T)$, where $\ve{u}_T$ is the restriction of $\ve{u}$ to vertices of the triangle $T$. 
Given the above symmetry, its gradient $\ve{g}(\ve{u}) = \nabla_{\ve{u}} E$ therefore is invariant with respect to $R$.
As a consequence, the metric symmetry is maintained as $\ve{u}$ evolves, and the triangulation remains symmetric during evolution.
Therefore the resulting conformal metric will be symmetric, i.e., identical on the two copies, as well. Consequently, if we cut the mesh along the symmetry line (the former boundary) in the end, so as to discard one copy, a boundary vertex $v_i$ will have exactly half the prescribed angle, $\frac{1}{2}\hat\Theta_i = \pi-\hat\kappa_i$, and therefore exhibit a discrete geodesic boundary curvature of $\hat\kappa_i$, just as intended.
The fact that $\ve{u}$ (and thus all vertex-associated attributes) evolve symmetrically furthermore implies that we can use a \emph{tufted} double cover as in \cite{sharp2020laplacian}, where not only boundary vertices but all vertices are shared between the two symmetric halves of $\cT$. This reduces the number of variables in the optimization problem.
This symmetry does not mean, however, that computations could trivially be restricted entirely to one half of the double cover only: edge flips may, and commonly will, create edges and faces spanning both halves of the double cover, crossing the symmetry line. 
Even more importantly, the symmetry leads to
co-circular vertex configurations that are \emph{stable}, i.e., for a given triangulation these remain co-circular \emph{independent of the evolution of $\ve{u}$}. These configurations need to be handled specially, as we explain in more detail below.

\subsection{Symmetric Meshes}

\newcommand{\Ea}{E^\parallel}
\newcommand{\Ep}{E^\perp}
\newcommand{\Hp}{E^\perp}

We begin by making precise the notion of \emph{combinatorially symmetric} polygon mesh.
    In this, rather than using edges, we use \emph{halfedges}, each  
    associated with a unique face (or boundary loop, which can be treated exactly like a face). Specifically, each edge 
    corresponds to two halfedges. We will make use of this abstract notion of symmetric meshes to ensure that our subsequent considerations of edge flips and related aspects cover all combinatorial cases that may occur.

    \newcommand{\cN}{{\mathcal{N}}}
    \newcommand{\cO}{{\mathcal{O}}}
    \newcommand{\cC}{{\mathcal{C}}}
    \newcommand{\cQ}{{\mathcal{Q}}}

    \begin{defn}[Combinatorial Mesh]
    A combinatorial polygon mesh is a triple $(H,\cN,\cO)$ of a set of halfedges $H$, a bijective function $\cN: H \rightarrow N$ (\emph{next-halfedge} function), and a bijective function $\cO$ (\emph{opposite-halfedge} function) with the property
    \begin{equation}
    \cO^2(h) = h;\; \cO(h) \neq h
     \label{eq:O-prop}
    \end{equation} 
    i.e., all orbits of $\cO$ have size 2.
    \end{defn}

    \begin{defn}[Mesh Elements]
    Define the bijective circulator
    function $\cC: H \rightarrow H$ to be $\cN^{-1}(\cO(h))$. Then the mesh has the following implied elements:
    \begin{itemize}
        \item \emph{Faces} are the orbits of the next-halfedge function $\cN$.
        \item \emph{Vertices} are the orbits of the circulator function $\cC$.
        \item \emph{Edges} are the orbits of the opposite-halfedge function $\cO$.
    \end{itemize}
    Collectively we refer to them as (mesh) \emph{elements}. A halfedge \emph{belongs} to an element if it is part of the respective orbit.
    \end{defn}

    A \emph{mesh with boundary} is a mesh with a subset of its faces marked as boundary loops. The halfedges of these loops form the set $H^{bnd}$ of \emph{boundary halfedges}.

\begin{defn}[Reflection Map]
\label{def:R}
A reflection map $R: H \rightarrow H$
for a mesh $(H,\cN, \cO)$ without boundary is an involution ($R^2 = I$) defined on the set of halfedges: each halfedge is mapped either to itself, or forms a reflection pair with a distinct halfedge.
It is required to satisfy the following conditions: 
\begin{enumerate}
    \item preservation of $\cO$ relation: $\cO(R(h)) = R(\cO(h))$,
    \item inversion of $\cN$ relation: $\cN(R(h)) = R(\cN^{-1}(h))$,
    \item preservation of boundary: $h \in H^{bnd} \iff R(h) \in H^{bnd}$.
\end{enumerate}
\end{defn}
Note that conditions (1) and (2) correspond to the properties of continuity and orientation-reversal of continuous reflection maps \cite{FieldsSymSurfaces}.
They imply that $R$ maps orbits of $\cN$, of $\cO$, and, as a consequence, of $\cC$, to orbits of these functions, i.e., it is well-defined for faces, edges, and vertices (via $R(x) = x' \iff R(h) \in x'$ for any $h\in x$). 
Furthermore, because $R^2 = I$, all orbits of $R$
have length 1 or 2, whether it acts on halfedges, faces, edges, or vertices.
This implies the following partitioning.

\begin{prop}[Halfedge Sets]
$H$ can be partitioned into disjoint sets $H^1$, $H^2$, $H^s$  so that the following conditions are satisfied: 
\begin{itemize}
    \item $h \in H^s \iff R(h) = h$;
    \item $h \in H^1 \iff R(h) \in H^2$;
    \item for any face or edge $x$ either all belonging halfedges are in $H^1$, 
    or all are in $H^2$, or it is fixed by $R$ (i.e. $R(x)=x$)
\end{itemize}
\end{prop}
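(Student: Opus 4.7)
The plan is to take $H^s := \{h \in H : R(h) = h\}$, which makes condition~(1) immediate, and then show that the remaining halfedges can be partitioned consistently. A useful preliminary observation is that any $h \in H^s$ lies in an $R$-fixed face and an $R$-fixed edge: since $R$ carries $\cN$-orbits to $\cN$-orbits and $\cO$-orbits to $\cO$-orbits (properties~(1) and~(2) of \cref{def:R}), when $R(h) = h$ both orbits are preserved as sets. Consequently condition~(3) is vacuously satisfied on halfedges in $H^s$, and the real task is to split $H \setminus H^s$ into $H^1$ and $H^2$.

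For halfedges in $H \setminus H^s$, which come in $R$-pairs $\{h, R(h)\}$, I would construct a $2$-coloring $\sigma \colon H \setminus H^s \to \{1,2\}$ satisfying $\sigma(R(h)) \neq \sigma(h)$ and with $\sigma$ constant on the halfedges of every non-$R$-fixed face and every non-$R$-fixed edge; then $H^i := \sigma^{-1}(i)$ directly yield conditions~(2) and~(3). To build $\sigma$, I would consider the equivalence relation on $H \setminus H^s$ generated by ``$h \sim h'$ iff $h, h'$ share a non-$R$-fixed face or a non-$R$-fixed edge,'' and assign a common value of $\sigma$ to each equivalence class. The existence of such an assignment reduces to the claim that every equivalence class $\mathcal{K}$ satisfies $R(\mathcal{K}) \neq \mathcal{K}$, so classes come in $R$-paired sets $\{\mathcal{K}, R(\mathcal{K})\}$ to which the two colors can be assigned consistently.

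The hard part is this disjointness claim $R(\mathcal{K}) \neq \mathcal{K}$. I would rule out the existence of a chain $h \sim h_1 \sim \cdots \sim h_n = R(h)$ through non-fixed elements by using the orientation-reversing nature of $R$ (a direct consequence of properties~(1)--(2) of \cref{def:R}) combined with the orientability of the triangle mesh $\cT$: concatenating such a chain with its $R$-image produces an $R$-invariant closed walk entirely within the non-fixed part of $\cT$, but on an oriented closed surface the fixed set of an orientation-reversing involution with nonempty fixed set separates the surface into two $R$-swapped components (a combinatorial analogue of the classical result on symmetric Riemann surfaces), so no such walk can exist. In the intended setting of the paper (\cref{sec:doublecover}), $R$ arises from the double-cover construction and the fixed subcomplex containing the former boundary manifestly separates the two sheets, so the argument can be made directly on the combinatorics without invoking the topological theorem.

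Finally, with $\sigma$ in hand, conditions~(1)--(3) all hold by design. The only subtlety is that ``type-B'' $R$-fixed edges (where $R$ swaps the two halfedges rather than fixing them individually) contain two halfedges to which $\sigma$ assigns opposite values; this is consistent because the containing edge is $R$-fixed and therefore not subject to the monochromaticity clause of condition~(3). The same remark applies to any $R$-fixed faces whose halfedges are not themselves in $H^s$.
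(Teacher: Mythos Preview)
Your approach is essentially the paper's: both take $H^s$ to be the $R$-fixed halfedges, form equivalence classes on $H\setminus H^s$ via adjacency through non-fixed faces and edges (the paper calls this relation $\cQ$, built from $\cN$ restricted to non-fixed faces and $\cO$ restricted to non-fixed edges), and then argue that $R$ swaps these classes pairwise so that a two-coloring exists. The only real difference is in how the key separation step is justified---the paper appeals to an external result stating that the fixed elements form a cycle, whereas you invoke the orientation-reversing nature of $R$ together with a topological separation argument (and, like the paper's citation, your general-case sketch leaves some work to the reader; your fallback to the explicit double-cover construction is the cleanest route in the setting actually used).
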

\begin{proof} 
If $x$ is not fixed, by the well-definedness of $R$ on mesh elements, for each $h\in x$ we have $R(h) \not\in x$. Therefore for a non-fixed individual face or edge $x$ all its halfedges can be assigned to $H^1$ (or to $H^2$) without contradicting the above conditions. It needs to be shown that this can be done for all such elements consistently.

Let $H^e$ the set of halfedges whose edges are not fixed and $H^f$ the set of halfedges whose faces are not fixed.
Let $\cQ$ the relation that is the union of $\cO|_{H^e}$ and $\cN|_{H^f}$ on $H\setminus H^s$.
Consider the connected components $H_i$ of $\cQ$ (intuitively: the mesh's connected components separated by fixed edges and fixed faces). Due to the properties of $R$ (preserving/inverting $\cO$ and $\cN$) it is well-defined on these connected components via $R(H_i) = H_j \Leftrightarrow R(h) \in H_j$ for any $h \in H_i$. Using arguments analogous to \cite[Prop.~2]{FieldsSymSurfaces} one verifies that the set of fixed elements necessarily forms a cycle; therefore there are at least two such connected components.

As $R$ on $H\setminus H^s$ has orbits of length 2 only, it allows a bipartition of the connected components, i.e., they can be assigned to two sets $H^1$ and $H^2$ in accordance with the above conditions.
\end{proof}

This leads to the following partitioning of 
the sets of edges and faces, where $e = (h,h')$, $f = (h_0,\ldots h_{m-1})$ denote the orbits of belonging halfedges:

\begin{itemize}
\item \makebox[.8cm]{$e  \in E_i$\hfill} $\iff h,h' \in H^i$, $i=1,2$
\item \makebox[.8cm]{$e \in \Ep$\hfill} $\iff h,h' \in H^s$
\item \makebox[.8cm]{$e \in \Ea$\hfill} $\iff h = R(h')$
\item \makebox[.8cm]{ $f \in F_i$\hfill} $\iff h_0 \in H^i$, $i=1,2$
\item \makebox[.8cm]{$f \in F^s$\hfill} $\iff R(h_0) \in f$
\end{itemize}
The set $\Ep$ is the set of edges (perpendicularly) crossing the symmetry line between two halves of a symmetric mesh mapped to each other (see \cref{fig:symmflips} right); the set  $\Ea$  is the set of edges on the symmetry line; $F^s$ is the set of faces that cross the symmetry line, and are mapped by the symmetry map to themselves.

\subsubsection{Double Cover Construction}

Using the terminology established above, our construction from \cref{sec:doublecover} can be formalized as follows.
Given a mesh $N = (H^0,\cN^0, \cO^0)$,
with boundary and interior halfedges $H^{bnd} \cup H^{int} = H^0$, we discard $H^{bnd}$ and set $H = H^1\cup H^2$ where $H^1 = H^{int}$ and $H^2 = \bar H^{int}$, where $\bar{\cdot}$ denotes a copy. The reflection map $R$ is defined via $R(h) := h'$ if $h'\in H^2$ is the copy of $h\in H^1$.
$\cO^0$ is adopted on both copies to define $\cO$, except that $\cO(h) := R(h)$ if $\cO^0(h)\in H^{bnd}$; this latter adjustment constitutes the \emph{gluing} of the two copies along their boundaries.
Finally
\begin{align*}
\cN(h) :=
\begin{cases}
\cN^0(h) &\text{ if } h\in H^1,\\
R({\cN^{0}}^{-1}(R(h))) &\text{ if } h\in H^2.
\end{cases}
\end{align*}

This forms the symmetric double cover mesh $\cT=(H, \cN, \cO, R)$ with triangle faces and map $R$. Note that $R$ is a reflection map: it satisfies the conditions of \cref{def:R} (where condition (3) is void as $\cT$ has no boundary). It is easy to see that this construction implies $\Hp =\varnothing$ and $F^s = \varnothing$, i.e., no element crosses the symmetry line (the former boundary). $\Ea$ contains the edges lying \emph{on} the symmetry line, i.e., those for whose halfedges the $\cO$ relation was adjusted to glue the two copies.

This initially simple situation can change, however, when edge flips are performed on the double cover mesh.

\subsection{Symmetric Flips}

When an edge $e$ in a symmetric mesh $M = (H,\cN,\cO,R)$ shall be flipped, the edge $R(e)$ needs to be flipped as well (unless $R(e) = e$), so as to be able to maintain a symmetric mesh. The simultaneous flip of $e$ and $R(e)$ (as well as the single flip of $e$ if $R(e) = e$) is referred to as \emph{symmetric flip}. As discussed in \cref{sec:doublecover}, in the algorithm from \cref{sec:algo} the metric evolves symmetrically. This implies that whenever the algorithm intends to flip an edge $e$, it simultaneously intends to flip $R(e)$ as well. The algorithm is therefore compatible with the restriction to symmetric flips.

While for an edge $e\in E_i$ with incident faces $f,g\in F_i$ the process is obvious, special care needs to be taken when elements from $\Ea$, $\Ep$, or $F^s$ are involved. We will exhaustively distinguish different types of symmetric flips based on the membership of the involved edges and faces in these sets.

\paragraph{Flip Types}
For a triple $(f_a, e, f_b)$ of an edge $e$ with incident faces $f_a$, $f_b$, the triple of labels denoting their set memberships, e.g., $(1,\parallel,2)$, is called flip \emph{type} of the edge $e$.

\paragraph{Consistent Flip Types}
We say that a type is \emph{consistent} if it may occur in a symmetric mesh. For instance, $(1,\perp,1)$ is not a consistent type, as edges from $\Ep$ necessarily have incident faces from $F^s$ by definition.
The following statements help ruling out combinations of labels for a triple $(f_a, e, f_b)$ in a symmetric mesh (up to exchange of $f_a$ and $f_b$):\bigskip\bigskip\bigskip

\begin{prop}[Label Compatibility] \quad
\begin{itemize}
\item[(a)] $e\in\Ep \Rightarrow f_a,f_b\in F^s$.
 
\item[(b)] $e\in\Ea \Rightarrow f_a\in F^1, f_b\in F^2$ or $f_a=f_b\in F^s$.

\item[(c)] $e\in E^1 \Rightarrow f \notin F^2$,\; $e\in E^2 \Rightarrow f \notin F^1$.

\item[(d)] $e\in E^i$, $f_a,f_b\in F^s \Rightarrow R(e)\in f_a,f_b$.
\end{itemize}
\label{prop:edge-faces1}
\end{prop}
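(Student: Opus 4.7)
My plan is to prove all four items by short halfedge-level arguments invoking two facts established earlier: (i) $R$ is well-defined on edges and faces, so $R(x)$ is characterized by containing $R(h)$ for any $h\in x$ (for edges this uses condition (1) of \cref{def:R}), and (ii) from the preceding proposition, every non-fixed face has all its halfedges in a single $H^i$, with the analogous statement for edges. Each part then amounts to tracing which halfedge sets the halfedges of $e$, $f_a$, $f_b$ must live in.

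For (a), I would pick any halfedge $h$ of $e$; since $e\in\Ep$, both halfedges of $e$ lie in $H^s$, so $R(h)=h$. Then $h$ belongs to both $f_a$ and $R(f_a)$, forcing $R(f_a)=f_a\in F^s$, and similarly for $f_b$. For (b), let $h\in f_a$ and $h'\in f_b$ be the two halfedges of $e\in\Ea$, so $h'=R(h)$ and hence $R(f_a)=f_b$; either $f_a=f_b$ is fixed, placing it in $F^s$, or $f_a\neq f_b$ form a reflection pair, placing them (up to swap) in $F^1$ and $F^2$. For (c), if $e\in E^1$ and an incident face $f$ were in $F^2$, then a halfedge of $e$ belonging to $f$ would have to lie in both $H^1$ (because $e\in E^1$) and $H^2$ (because $f\in F^2$), a contradiction; the case $e\in E^2$ is symmetric. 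For (d), taking $e\in E^1$ WLOG with $h\in e$ belonging to $f_a$, and using $R(f_a)=f_a$, the halfedge $R(h)$ also belongs to $f_a$; by condition (1) of \cref{def:R}, $R(h)$ belongs to the edge $R(e)$, so $R(e)$ is incident to $f_a$, and the same argument works for $f_b$.

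No substantial obstacle is expected; the only real care needed is in the bookkeeping that transfers halfedge-level relations to element-level relations via condition (1) of \cref{def:R}, so that ``$R(h)$ is a halfedge of the edge $R(e)$'' is legitimately available when needed (as in part (d)). Otherwise each item reduces to a one-line check against the partitioning established in the preceding proposition.
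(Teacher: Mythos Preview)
Your proposal is correct and follows essentially the same halfedge-level approach as the paper's proof; the only cosmetic difference is in part~(b), where you first establish $R(f_a)=f_b$ and then split on whether $f_a=f_b$, whereas the paper cases directly on the membership of $f_a$, but the underlying argument is identical.
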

\begin{proof}
Part (a) follows immediately from the definition of $F^i$, as faces from $F^i$ cannot have edges from $\Ep$.

Suppose a face $f_a$ is incident at an edge $e$ from $\Ea$. For these edges $R(e) = e$. Suppose $f_a\in F^1$, then $R(f_a)$ is incident to $R(e) = e$, therefore $f_b = R(f_a)$. As $R(f_a) \in F^2$ by definition of $F^2$, this proves the first part of (b). Suppose $f_a \in F^s$, and let $h$ a halfedge $h \in e$, $h \in f_a$. Then $R(h) \in f_a$ by the 
definition of $F^s$; but, by definition of $\Ea$, $R(h)\in e$, so $f_a = f_b$, i.e., a face is adjacent to itself along $e$. 

Part (c) directly follows from the definitions of 
$E^i$ and $F^i$. 

In part (d), 
suppose $f_a$ and $f_b$ are incident at $e\in E^1$, $f_a, f_b \in F^s$, and $e=(h_a,h_b)$. 
Then $R(h_a) \in R(f_a) = f_a$,  $R(h_b) \in f_b$, and $\cO(R(h_a)) = R(h_b)$ by the 
properties of $R$, i.e., $(R(h_a), R(h_b))$
is an edge. By definition of $E^i$, it has to be in $E^2$, i.e., faces $f_a$ and $f_b$ share a second edge, and this edge is from $E^2$. 
\end{proof}

\cref{prop:edge-faces1} leaves the following six possibilities, up to a $1 \leftrightarrow 2$ exchange. It is easy to construct examples proving that all of them are consistent, i.e., may occur in a symmetric mesh:
\begin{itemize}
    \item \makebox[1.8cm]{Edge in $E^1$:\hfill} 
    \makebox[3.2cm]{Set 1a: $(1,1,1)$, $(1,1,s)$\hfill}
    Set 1b: $(s,1,s)$
    \item \makebox[1.8cm]{Edge in $\Ea$:\hfill}
    \makebox[3.2cm]{Set 2a: $(1,\parallel,2)$\hfill}
    Set 2b: $(s,\parallel, s)$
    \item \makebox[1.8cm]{Edge in $\Ep$:\hfill}  Set 3: \;\,$(s,\perp,s)$
\end{itemize}

\begin{figure}[b]
    \centering
    \begin{overpic}[width=.97\linewidth]{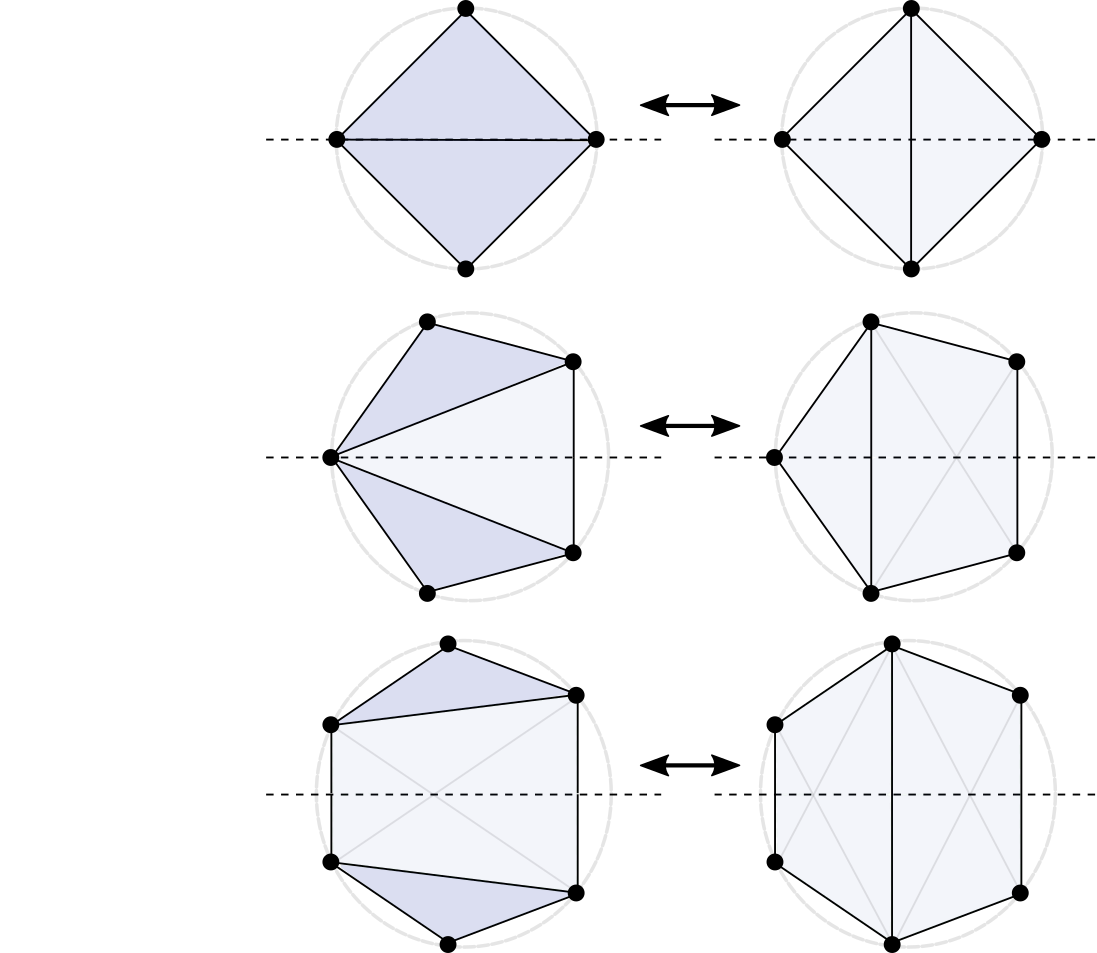}
    \small
    \put(2.5,73.0){
        \begin{tabular}{c}
            $(1,\parallel,2)$\\
            $\leftrightarrow$\\
            $(t, \perp, t)$
        \end{tabular}
    }
    
    \put(-4,44.5){
        \begin{tabular}{c}
            $(1,1,t) + (2,2,t)$\\
            $\leftrightarrow$\\
            $(t, \perp, q)$
        \end{tabular}
    }
    
    \put(-4,13.7){
        \begin{tabular}{c}
            $(1,1,q) + (2,2,q)$\\
            $\leftrightarrow$\\
            $(q, \perp, q)$
        \end{tabular}
    }
    \end{overpic}
    \caption{Symmetric edge flips involving faces from $F^s$ (light blue), crossing the symmetry line (dashed). Faces from $F^1$ and $F^2$ are colored dark blue. The configurations are shown with co-circular vertices, though combinatorially flips can be performed in any state. Note that the light blue quads' vertices, however, are necessarily co-circular by symmetry, regardless of metric.}
    \label{fig:symmflips}
\end{figure}

\begin{table*}[t]
\caption{Combinatorial updates required to perform symmetric flips of all relevant consistent types. The change to $\cN$ is given by listing the orbits (halfedge cycles forming faces) of $\cN$ created by the flip. The employed indexing is depicted in the figures left and right.
Similarly, we define changes to $R$ viewing it as a permutation with orbits of length 1 or 2, and listing the sets of orbits being replaced. Finally, rather than deleting and adding new halfedges on demand, for implementational efficiency we can associate a superfluous pair of halfedges, eliminated by a quad-creating flip, with the quad (listed behind the bar). Because a flip that requires a new pair of halfedges always eliminates a quad, this pair can then be reused. \label{tab:flips}}
\renewcommand{\arraystretch}{1.2}
\small
\begin{tabular}{ccc}
\raisebox{-.5\height}{
    \begin{overpic}[trim=0 0 380 0,clip,width=.088\linewidth]{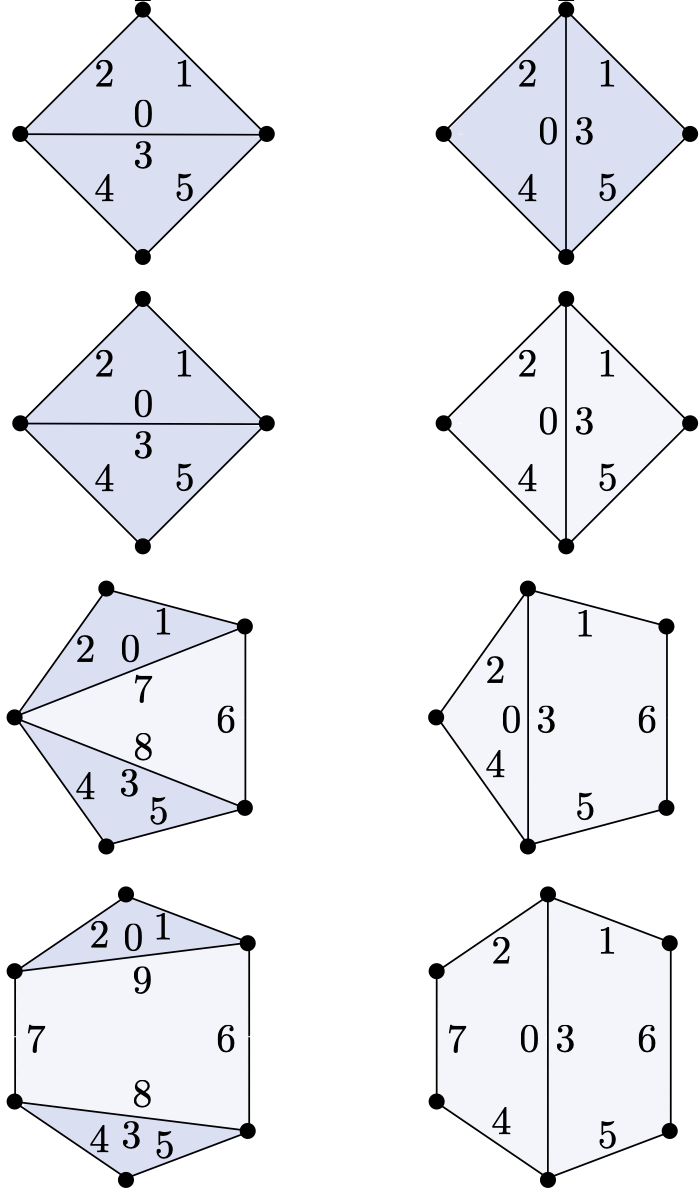}
    \small
    \end{overpic}}\;\,
    &
\begin{tabular}{|l|l|}
\hline
 \multicolumn{2}{|c|}{$(1,1,1)+(2,2,2)\;\;\leftrightarrow\;\; (1,1,1)+(2,2,2)$}  \\\hline
      $\cN:(h^i_{0},h^i_{1},h^i_{2}),\; (h^i_{3},h^i_{4},h^i_{5}),\; i=1,2$& 
      $\cN:(h^i_0,h^i_2,h^i_{4}),\; (h^i_{1},h^i_{3},h^i_5),\; i=1,2$\\
     $R:$ unchanged & $R:$ unchanged\\\hline
 \multicolumn{2}{c}{\vspace{-0.25cm}}  \\\hline
 \multicolumn{2}{|c|}{$(1,\parallel,2)\;\;\leftrightarrow\;\; (t,\perp,t)$} \\\hline
     $\cN: (h_{0},h_{1},h_{2}),\; (h_{3},h_{4},h_{5})$& 
     $\cN: (h_{0},h_{2},h_{4})$,\; $(h_{1},h_{3},h_{5})$\\  
     $R: (h_{0},h_{3})$ & $R: (h_{0}),\; (h_{3})$\\\hline
 \multicolumn{2}{c}{\vspace{-0.25cm}}  \\\hline
  \multicolumn{2}{|c|}{$(1,1,t)+(2,2,t)\;\;\leftrightarrow\;\; (t,\perp,q)$\phantom{xxxxxxx.}} \\\hline
     $\cN: (h_{0},h_{1},h_{2}),\; (h_{3},h_{4},h_{5}),(h_{6}, h_{7},h_{8})$& 
     $\cN: (h_{0},h_{2},h_{4}),\; (h_{1},h_{3},h_{5},h_{6}) \,\;|\;\, h_{7},h_{8}$\\
     $R: (h_{0},h_{3}),\; (h_{7},h_{8})$ & $R: (h_{0}),\; (h_{3})$\\\hline
 \multicolumn{2}{c}{\vspace{-0.25cm}}  \\\hline
 \multicolumn{2}{|c|}{$(1,1,q)+(2,2,q)\;\;\leftrightarrow\;\; (q,\perp,q)$\phantom{xxxxxxxx}} \\\hline
     $\cN: (h_{0},h_{1},h_{2}),\; (h_{3},h_{4},h_{5}),(h_{6}, h_{9},h_{7},h_{8})$\qquad\qquad\qquad& 
     $\cN: (h_{0},h_{2},h_{7}, h_{4}),\;  (h_{1},h_{3},h_{5},h_{6}) \,\;|\;\, h_{8},h_{9}$\qquad\qquad\qquad\\
     $R: (h_{0},h_{3}),\; (h_{8},h_{9})$ & $R: (h_{0}),\; (h_{3})$\\\hline
\end{tabular}
&
\raisebox{-.5\height}{
    \begin{overpic}[trim=380 0 0 0,clip,width=.088\linewidth]{images/labeled_all2.png}
    \small
    \end{overpic}}
\end{tabular}
\vspace{0.2cm}
\end{table*}

\paragraph{Relevant Flip Types} Among these types, only four are also \emph{relevant}; we show in \cref{sec:irrel} that the two types in the sets 1b and 2b are necessarily associated with edges that satisfy the Delaunay condition \cref{eq:delaunay_ell} irrespective of the choice of lengths of edges involved. These are not relevant for the purpose of the algorithm from \cref{sec:algo}, which exclusively flips non-Delaunay edges. This leaves only sets 1a, 2a, and 3 for further consideration.

\paragraph{Triangles and Quadrilaterals} 
As we shall see, a flip of type $(1,1,s)$ leads to a
pair of triangles in $F^s$ that together form a quadrilateral which is inscribed, i.e., the four vertices are intrinsically co-circular (\cref{fig:symmflips}). Remarkably, this statement holds regardless of metric, as long as it is symmetric, i.e., invariant with respect to $R$. Instead of randomly choosing a diagonal splitting this quadrilateral into two triangles, we explicitly represent it as a quadrilateral face. This avoids violating the symmetry by the diagonal, which would complicate recovering the surface with boundary after the conformal metric is computed, and avoids issues such as infinite sequences of flips caused by stably (numerically nearly) co-circular points. 

Faces in $F^s$ can therefore be triangular or quadrilateral. We accordingly partition $F^s = F^t \cup F^q$, and further distinguish $t$-versions and $q$-versions of flip types involving the label $s$. This yields a total of seven types that are consistent and relevant. They are related as follows by the fact that a symmetric flip of one or two edges of certain types reversibly yields a configuration of different type:
\begin{enumerate}
\item \makebox[2.9cm]{\hfill$(1,1,1) + (2,2,2)$} $\;\leftrightarrow\; (1,1,1) + (2,2,2)$;
\item \makebox[2.9cm]{\hfill$(1,\parallel,2)$} $\;\leftrightarrow\; (t, \perp, t)$;
\item \makebox[2.9cm]{\hfill$(1,1,t\;\!) + (2,2,t\;\!)$} $\;\leftrightarrow\; (t, \perp, q)$;
\item \makebox[2.9cm]{\hfill$(1,1,q) + (2,2,q)$} $\;\leftrightarrow\; (q, \perp, q)$. 
\end{enumerate}
Case (1) is the standard case of flipping a configuration not involving the symmetry line. (2), (3), and (4) are the special cases crossing the symmetry line; they are illustrated in \cref{fig:symmflips}. \cref{tab:flips} details the combinatorial changes to be performed on the symmetric mesh so as to execute these symmetric flips.

\subsection{Symmetric Metric} 
We now assume the symmetric combinatorial mesh $\cT$ is equipped with a metric, as in the algorithm from \cref{sec:algo}, and that this metric is symmetric as well.

\paragraph{Delaunay Criterion}
For edges with two incident triangles, the Delaunay check needed for the algorithm is standard, via~\cref{eq:delaunay_ell}.
If one of the incident faces is a quad, due to symmetry it, regardless of the metric, is an inscribed trapezoid. As a consequence, whichever way we (virtually) split it into triangles we get the same angles opposite any of its edges. Hence, we may perform the Delaunay check based on arbitrary virtual diagonals in the quads. 

\paragraph{Gradient and Hessian}
For the same reason, the computation of gradient $g(\ve{u})$ and Hessian $H(\ve{u})$ can be performed based on arbitrary diagonals; the choice does not affect the result \cite{springborn2019ideal}.

\paragraph{Ptolemy Formula}
Note that each of the edges created by symmetric flips involving quads (\cref{fig:symmflips}) can also be obtained by a sequence of edge flips involving triangles (and split quads) only. In this way the length of such edges can be computed using (multiple instances of) the standard Ptolemy formula \cref{eq:ptolemy-orig}. As there are only four types of flips involving quads, one can conveniently derive closed form expressions for these cases in advance, rather than actually performing these sequences for each flip. Note that each quad needs to store its diagonal length to enable these computations.

\subsubsection{Irrelevance of Flip Types $(s,\parallel,s)$ and $(s,1,s)$}
\label{sec:irrel}
\begin{prop}
Types 
$(t,\parallel,t)$, $(q,\parallel,q)$, 
$(t,1,t)$, $(t,1,q)$, and $(q,1,q)$
are associated with edges that are Delaunay regardless of metric.
\label{prop:del-configs}
\end{prop}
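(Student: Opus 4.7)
\medskip

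\noindent\textbf{Proof proposal.} The plan is to exploit the assumed $R$-symmetry of the mesh and metric, together with \cref{prop:edge-faces1}, to show that for each of the listed types the Delaunay condition $\cos\alpha_a + \cos\alpha_b \geq 0$ holds automatically for any $R$-symmetric edge-length assignment. The guiding idea is that, in each configuration, symmetry forces either a combinatorial collapse that rules out the type in a simple manifold mesh, or a four-point pattern that is automatically inscribed, yielding $\alpha_a+\alpha_b=\pi$ and equality in \cref{eq:delaunay_ell}.

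For the two parallel-edge types $(t,\parallel,t)$ and $(q,\parallel,q)$, \cref{prop:edge-faces1}(b) applied to $e\in\Ea$ with both $f_a,f_b\in F^s$ forces $f_a=f_b$, so a single $F^s$ face would have to be incident to $e$ on both sides. I would argue that such a folded face cannot arise from the double-cover construction of \cref{sec:doublecover} followed by the symmetric flips listed in \cref{tab:flips}, since this would require both halfedges of $e$ to belong to a single $\cN$-orbit, which is inconsistent with the manifold structure that is preserved by the construction and by each flip type in \cref{tab:flips}.

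For the three $E^1$-edge types, \cref{prop:edge-faces1}(d) gives that $R(e)$ is also an edge of both $f_a$ and $f_b$, so the two faces share two distinct edges. In the $(t,1,t)$ case, any two distinct edges of a triangle share a vertex, so $f_a$ and $f_b$ would coincide as triangles, again contradicting manifoldness. For $(t,1,q)$ and $(q,1,q)$, I would track the $R$-action on the vertex sets of $f_a$ and $f_b$ and use the fact that an $F^q$ quad produced by the flips of \cref{tab:flips} has its four vertices forming two $R$-swapped pairs. Such a four-point configuration is automatically co-circular: the perpendicular bisector of each swapped pair coincides with the symmetry axis, so the unique circle through any three of the four points has its center on the axis and necessarily passes through the fourth by reflection symmetry. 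With all four vertices co-circular, the inscribed-angle theorem gives $\alpha_a+\alpha_b=\pi$, hence $\cos\alpha_a+\cos\alpha_b=0$ and \cref{eq:delaunay_ell} holds with equality.

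The main obstacle is the vertex bookkeeping: I have to verify that throughout any sequence of the symmetric flips in \cref{tab:flips}, starting from the initial double cover where $F^s=\varnothing$, every $F^q$ quad that ever appears carries the ``two $R$-swapped pairs'' vertex structure, rather than a ``$2$ fixed $+$ $2$ swapped'' structure (which would not be automatically co-circular). This is essentially an inductive invariant that must be checked flip by flip against \cref{tab:flips}, complementing the purely local constraints of \cref{prop:edge-faces1}.
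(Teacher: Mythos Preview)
Your approach has a genuine gap: you try to argue that the configurations $(t,\parallel,t)$, $(q,\parallel,q)$, and $(t,1,t)$ \emph{cannot occur}, whereas the proposition (and the paper) treats them as legitimate configurations and proves their edges are always Delaunay. Concretely, your manifoldness objections do not hold in the halfedge setting of \cref{def:R}. Having both halfedges of an edge in the same $\cN$-orbit (a self-adjacent face) is a perfectly valid manifold configuration; \cref{prop:edge-faces1}(b) explicitly lists $f_a=f_b\in F^s$ as a possibility, and the paper states right after that list that examples of all six types, including Set~2b, are easy to construct. Similarly, for $(t,1,t)$ your step ``two distinct edges of a triangle share a vertex, so $f_a$ and $f_b$ would coincide'' is a non sequitur: two distinct triangles in a halfedge mesh can share two edges (and even all three vertices) without being the same face.

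Your treatment of $(t,1,q)$ and $(q,1,q)$ also does not close. You establish that each $F^q$ quad is inscribed (which the paper takes for granted anyway, as a symmetric trapezoid), but the Delaunay check for $e$ concerns the angle opposite $e$ in a triangle cut from $f_a$ \emph{and} the angle opposite $e$ in a triangle cut from $f_b$. Co-circularity of the four vertices of one quad does not put the relevant four vertices of the two opposite triangles on a common circle, so the inscribed-angle conclusion $\alpha_a+\alpha_b=\pi$ does not follow. The paper's argument is different and local to each face: by \cref{prop:edge-faces1}(b,d) the face incident to $e$ on either side contains both $e$ and $R(e)$; since $\ell(e)=\ell(R(e))$ by metric symmetry, that face is an isosceles triangle (case $t$) or an isosceles trapezoid (case $q$) with $e$ as one of the equal legs, and the angle opposite a leg in such a figure is strictly less than $\pi/2$. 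Summing over the two sides gives $\alpha_a+\alpha_b<\pi$, hence strict Delaunay. No invariant about ``two $R$-swapped pairs'' needs to be tracked through flip sequences.
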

\paragraph{\emph{\textsc{Proof}}}
Consider $(t,\parallel,t)$. By \cref{prop:edge-faces1}(b), it corresponds to a configuration 
with a single face:  $(f^t, e^\parallel, f^t)$. As the triangle $f^t$ is 
isosceles, and both side edges of the 
triangle coincide with $e^\parallel$, 
angles opposite $e^\parallel$ are $\pi/2-\alpha/2$ if the apex angle is $\alpha$, i.e., their sum is guaranteed 
to be less than $\pi$ and the edge is Delaunay. 
For $(q,\parallel,q)$, to evaluate
\begin{wrapfigure}[8]{r}{0.15\linewidth}
\vspace{-2.0mm}
\hspace{-0.5cm}
\begin{overpic}[width=1.2\linewidth]{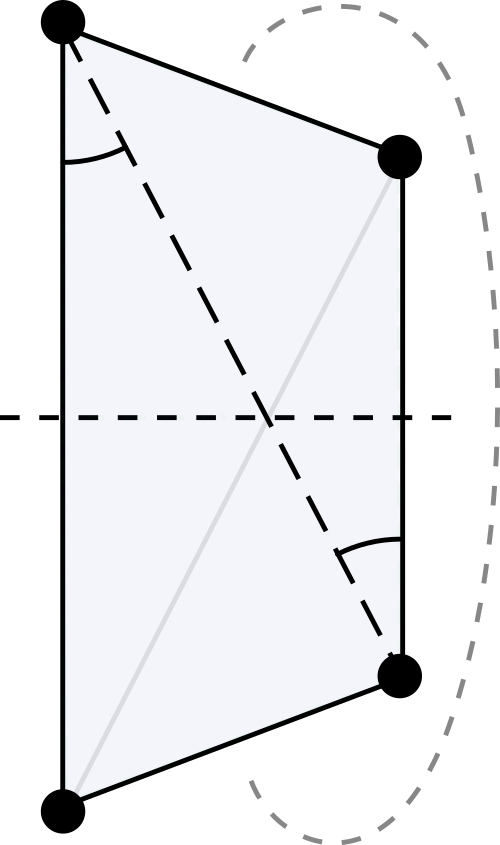}
\put(23,14){$e^\parallel$}
\put(23,75.5){$e^\parallel$}
\end{overpic}
\end{wrapfigure}the 
Delaunay criterion, we split 
$f^q$ into triangles. As $f^q$ is inscribed the choice of diagonal does not affect the angles; we can choose the diagonal that 
connects a vertex of $e^\parallel$ with 
a vertex with trapezoid angles $\leq\pi/2$
(see inset), from which we can see that both
angles opposite $e^\parallel$ are less than $\pi/2$. 
For cases $(t,1,t)$, $(t,1,q)$, and $(q,1,q)$ the same logic applies to each 
face incident at the shared edge~$e^1$.\qed

\subsection{Restriction to Single Cover}

Once the algorithm from \cref{sec:algo} has terminated and the desired conformal metric has been computed, we finally need to discard half of the double cover---or transfer the metric from half of it onto the original surface, depending on implementation specifics. To this end we effectively need to cut the symmetric surface along the line of symmetry. While initially the entire symmetry line is formed by a sequence of mesh edges, this may no longer be the case due to flips, namely whenever $F^s$ and $\Ep$ are not empty in the end. One simply needs to split all edges from $\Ep$ at their midpoint, and split the triangles and quads from $F^s$ by connecting these inserted split vertices. Alternatively, if an overlay data structure \cite{fisher2007algorithm} is used to keep track of a bijection between original mesh and modified mesh, the restriction to one half of the double cover is even easier, as the original edges are (as a whole or in parts) still present in the overlay mesh.

\section{Evaluation}

We have implemented \cref{alg:Newton} (with support for boundaries following \cref{sec:boundary}) in \texttt{C++}. Due to the method's solid theoretical foundation, the only limitation is due to numerical precision limits; to be able to assess this aspect, the implementation supports the optional use of the MPFR multi-precision floating point number type instead of standard double precision numbers, enabling the variation of numerical precision (as done in \cref{sec:res:prec}).

\subsection{Validation}
\subsubsection*{Closed Surfaces}

A dataset of mesh models together with angle prescriptions $\ve{\hat\Theta} > 0$ has been released with \cite{Myles:2014}. We applied our implementation to the closed models from this dataset; the error decay in the course of the algorithm on these cases is visualized in \cref{fig:decayMPZ}.

As further test instances we use 1000 different random target angle prescriptions $\ve{\hat\Theta}$ (with $\hat\Theta_i\in (\pi,3\pi)$ for all vertices $v_i$) on a sphere mesh (1K vertices). The error decay is visualized in \cref{fig:decay}. Note that the overall behavior is very similar, whether the prescribed angles are random or geometrically meaningful (as in \cref{fig:decayMPZ}).

We consider the extreme scenario of concentrating the target metric's entire curvature in one point (i.e., prescribing a single cone of angle $2\pi(2g-1)$ in an otherwise flat metric). Results for surfaces of increasing genus $g$ (procedurally generated $g$-tori) are shown in \cref{fig:genus}. A blow-up of the situation around the prescribed cone vertex on the genus 10 example is shown in \cref{fig:res:single}.

\begin{figure}[h]
    \centering

\begin{overpic}[scale=1.0]{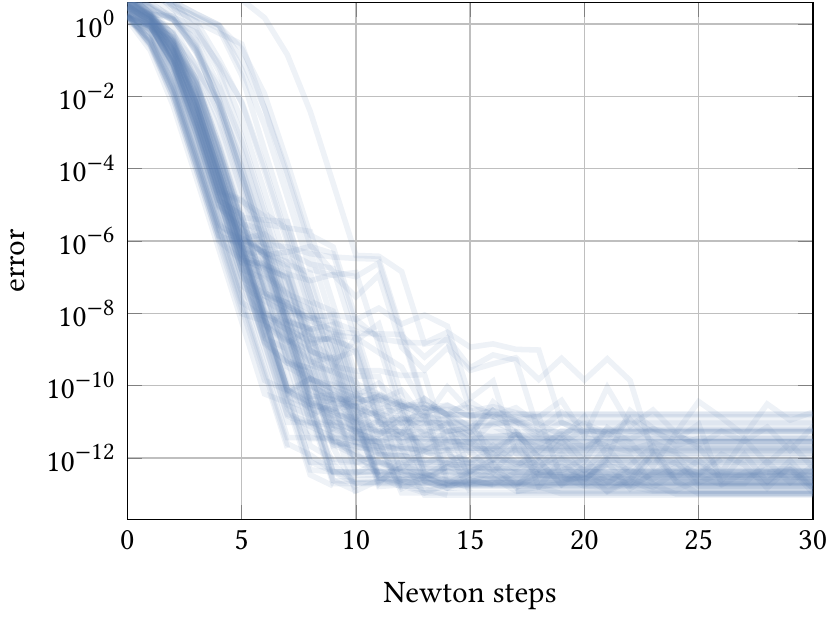}
  \end{overpic}

\vspace{-0.3cm}
    \caption{Decay of maximum angle error $\|\ve{\hat\Theta} - \ve{\Theta}\|_\infty$ over the iterations of the Newton algorithm. Each graph represents one of the instances from the dataset of \cite{Myles:2014}. 
    }
    \label{fig:decayMPZ}
    \vspace{0.5cm}
\end{figure}

\begin{figure}[h]
    \centering

 \begin{overpic}[scale=1.0]{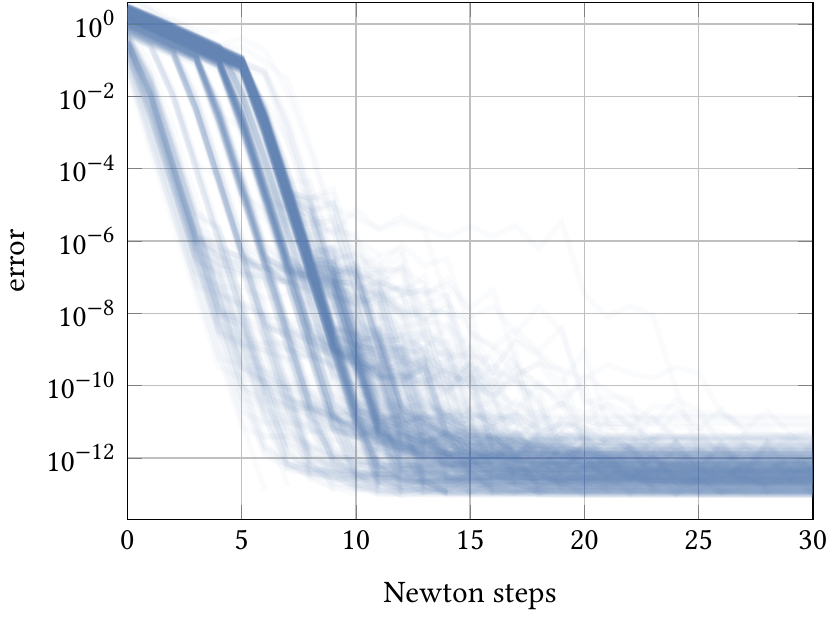}
     \put(14.5,11.58){\includegraphics[scale=1.0,trim = 35 28 0 0,clip]{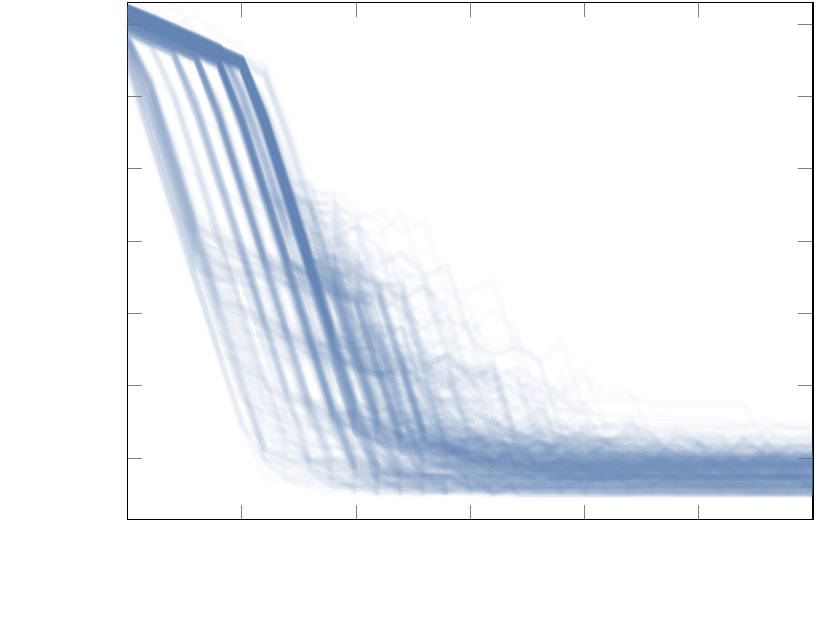}}  
  \end{overpic}

\vspace{-0.3cm}
    \caption{Decay of maximum angle error $\|\ve{\hat\Theta} - \ve{\Theta}\|_\infty$ over the iterations of the algorithm. Each graph represents one of 1000 random test instances. 
    }
    \label{fig:decay}
    \vspace{0.6cm}
\end{figure}

\begin{figure}[h!]
    \centering
    
 \begin{overpic}[scale=1.0]{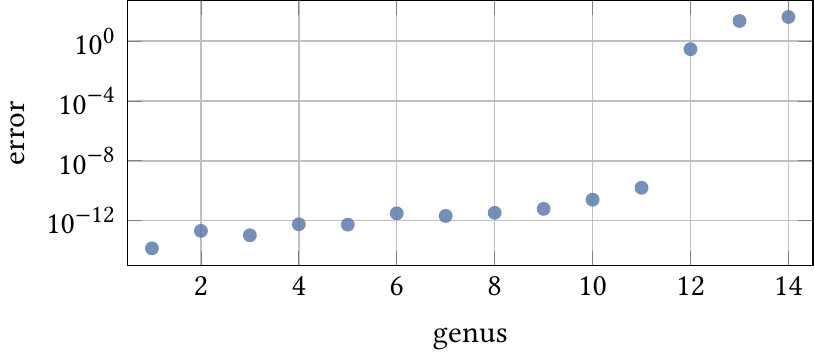}\end{overpic}
  
\vspace{-0.3cm}
    \caption{Final residual angle error for the extreme case of concentrating all curvature in a single cone on an $g$-torus surface (genus $g$). For the genus 11 case, where the residual error is still benign, the conformal scale factor spans 87 orders of magnitude. For the problematic genus 12 case it surpasses 100.
    }
    \label{fig:genus}
    \vspace{-0.9cm}
\end{figure}

\begin{figure}[t]
    \centering
 \begin{overpic}[width=0.99\linewidth]{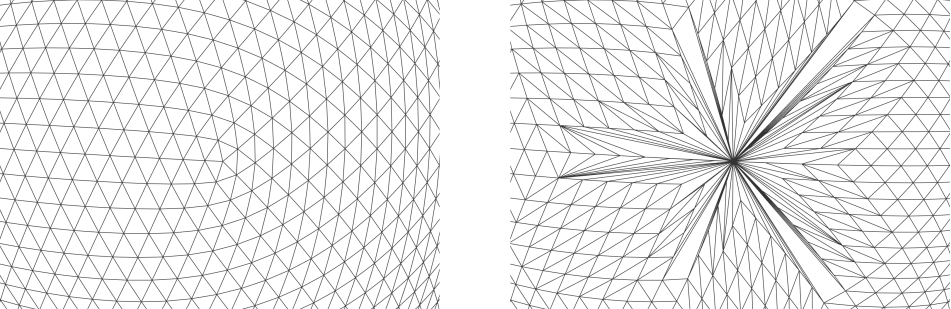}
  \end{overpic}
  \vspace{-0.15cm}
    \caption{Triangulation around a vertex with prescribed angle $\Theta = 38\pi$, before and after executing the algorithm. The triangulation on the right, while exhibiting bad angles under the depicted original metric, is Delaunay under the computed conformal metric (with curvature $-36\pi$ at the central vertex).}
    \label{fig:res:single}
\end{figure}

\subsubsection*{Surfaces with Boundary}

Similar to the experiment for surfaces without boundary, we generate 1000 different random target angle prescriptions $\ve{\hat\Theta}$ for a surface with boundary (a disk with 5K vertices). In the interior we prescribe a flat metric, at the boundary we prescribe a geodesic curvature, maximally in the range $\pm\pi$, i.e., $\hat\Theta_i\in (0,2\pi)$ for all boundary vertices $v_i$.
\cref{fig:boundarydata} shows the number of the different types of symmetric flips that are performed in the course of the algorithm on these cases. As expected, the number of flips is larger for cases with a prescribed curvature spanning a larger range.

\begin{figure}[b]
    \centering
 \begin{overpic}[scale=1.0]{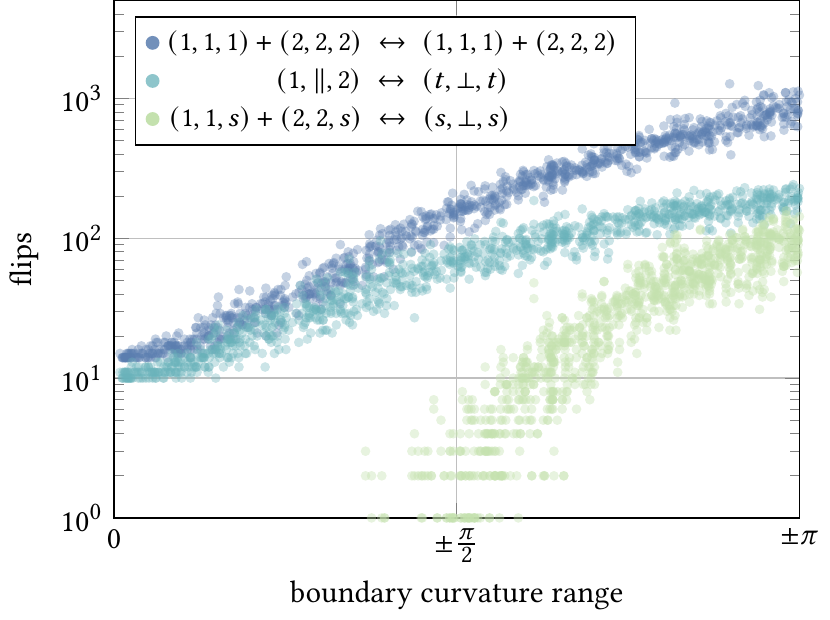}\end{overpic}
\vspace{-0.3cm}
    \caption{Scatter plot showing the numbers of different types of symmetric flips during the algorithm relative to the range of prescribed random boundary curvatures. Each dot represents one type of flips for one of 1000 test instances.
    }
    \label{fig:boundarydata}
\end{figure}

The above mentioned dataset from \cite{Myles:2014} also contains meshes with one or more boundary loops, together with angle prescriptions $\ve{\hat\Theta} > 0$ for interior and boundary vertices. The error decay on these cases is shown in \cref{fig:decay:MPZboundary}.

\begin{figure}[h]
    \centering
    
 \begin{overpic}[scale=1.0]{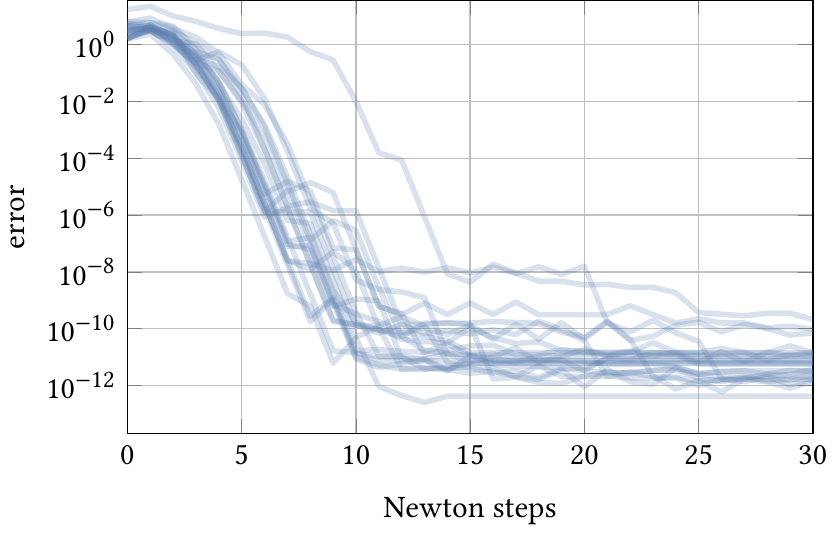}\end{overpic}

\vspace{-0.3cm}
    \caption{Decay of maximum angle error $\|\ve{\hat\Theta} - \ve{\Theta}\|_\infty$ over the iterations of the Newton algorithm. Each graph represents one of the instances \emph{with boundary} from the dataset of \cite{Myles:2014}. 
    }
    \label{fig:decay:MPZboundary}
    \vspace{0.3cm}
\end{figure}

\subsection{Comparison}
\label{sec:res:comparison}

We demonstrate the advantages of the Delaunay flip approach over the degeneration flip approach (\cref{sec:degenflips}) in terms of efficiency as well as numerical robustness. To this end we apply an implementation of the described method and an implementation of the algorithm described by \cite{Campen:2017:SimilarityMaps} (both using standard double precision floating point numbers) to the same set of inputs.

\begin{figure}[b]
    \centering

 \begin{overpic}[scale=1.0]{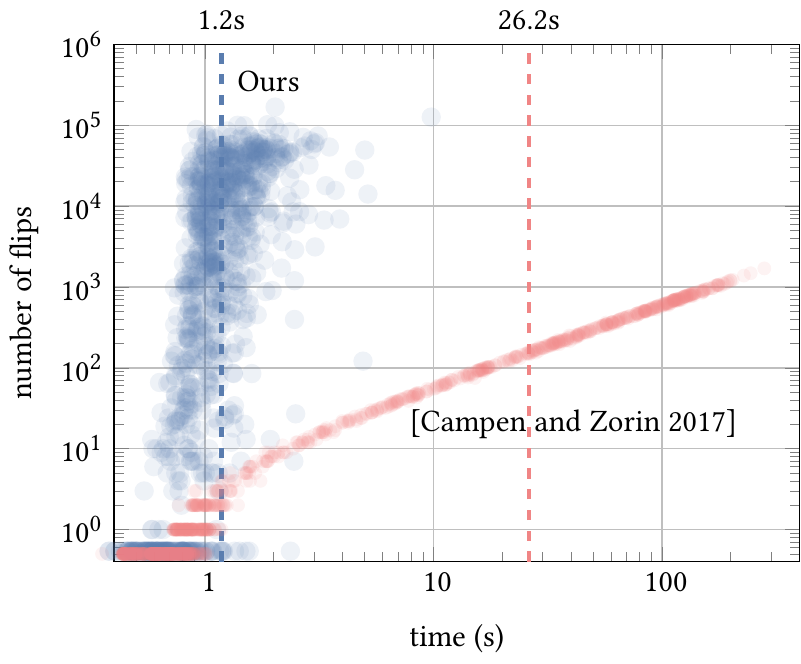}\end{overpic}
 
\vspace{-0.3cm}
    \caption{Scatter plot showing the number of flips and the run time (to reach $\varepsilon_\text{tol} = 10^{-10}$), for the described Delaunay-flip method (blue) and the degeneration flip method (red). Each dot represents one of 1000 test instances. Dashed lines mark the average run time, 1.2s and 26.2s, respectively.
    }
    \label{fig:efficiency}
\end{figure}

\subsubsection*{Efficiency}

The main differences between the two methods lie in the number of linear system solves (to compute the descent direction~$\ve{d}$) and the number of intrinsic flips. In the  proposed method, the number of flips is often significantly higher (see the discussion in \cref{sec:delflips}), while the number of system solves is lower. As a flip is a cheap local operations, while a system solve is an expensive global operation, a run time benefit can be conjectured.

{The scatter plot in \cref{fig:efficiency}} shows that this is the case on average. As test instances we use 1000 different random target angle prescriptions $\ve{\hat\Theta}$ (with $\hat\Theta_i\in (\pi,3\pi)$ for all vertices $v_i$) on a sphere mesh (10K vertices).
Only for relatively simple cases, where the target curvature can be matched without degeneration flips, the number of system solves may be similar such that the non-Delaunay method has a (relatively small) run time benefit due to the lower number of flips. On average, though, run time is $22\times$ lower with the Delaunay-based method on these examples.

\subsubsection*{Robustness}

Differences in robustness can best be observed by considering extreme cases. In \cref{fig:robustcompare} we show the residual error of the two methods when prescribing one very small or very large target angle (while distributing the remaining curvature). For small angles it becomes apparent that the degeneration flip algorithm is numerically more fragile.

\begin{figure}[t]
    \centering
    
\begin{tabular}{@{}cc@{}}
 \begin{overpic}[scale=1.0]{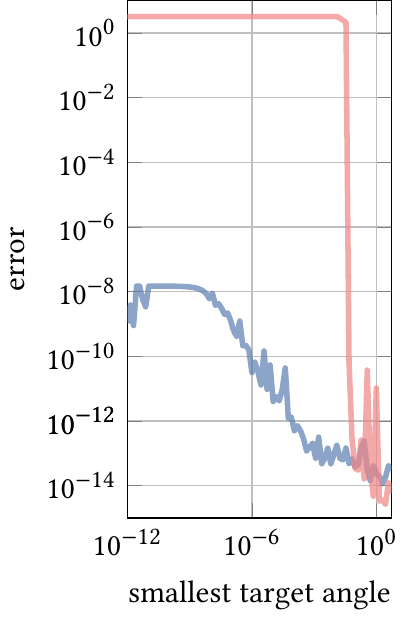}\end{overpic}
&
 \begin{overpic}[scale=1.0]{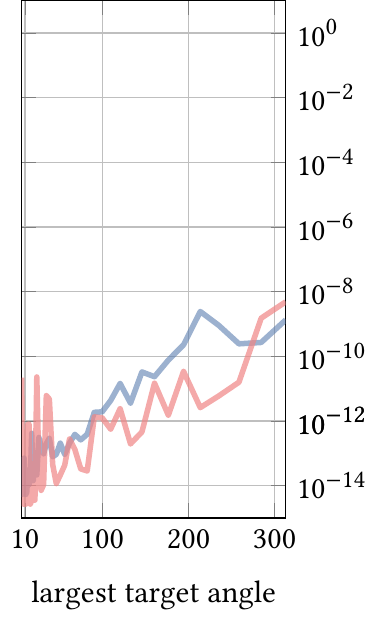}\end{overpic}
\end{tabular}

\vspace{-0.3cm}
    \caption{Final residual angle error $\|\ve{\hat\Theta} - \ve{\Theta}\|_\infty$ for extreme cases (one very small or very large target angle), comparing the Delaunay-based algorithm (blue) and the degeneration flip algorithm \cite{Campen:2017:SimilarityMaps} (red).
    }
    \label{fig:robustcompare}
\end{figure}

\subsection{Accuracy}
\label{sec:res:prec}

\begin{figure}[tb]
    \centering
 \begin{overpic}[scale=1.0]{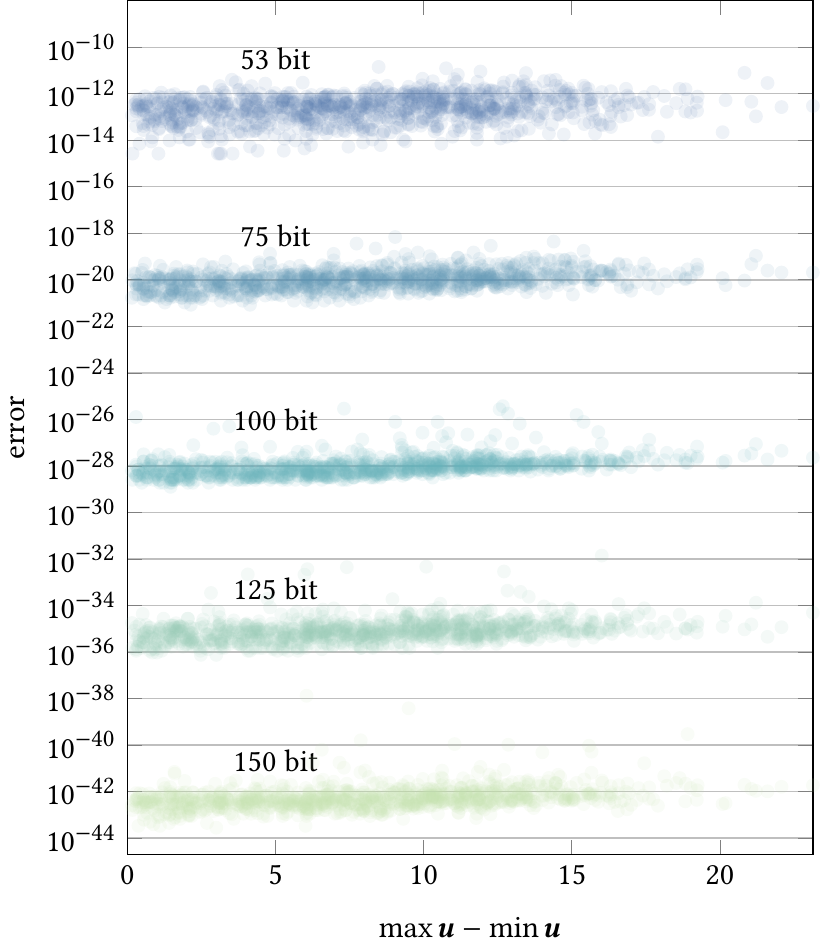}\end{overpic}
\vspace{-0.2cm}
    \caption{Scatter plot showing residual angle error $\|\ve{\hat\Theta} - \ve{\Theta}\|_\infty$ (after at most 50 Newton steps) relative to the range of logarithmic conformal scale factors~$u$. Each dot represents one test instance, run using floating point numbers with a mantissa of 53~bits (\texttt{double}), 75~bits, 100~bits, 125~bits, 150~bits (\texttt{MPFR}).}
    \label{fig:precision}
\end{figure}

While the method is theoretically guaranteed to yield the desired result, in practice numerical inaccuracies limit how closely the target curvature will be matched. As the method involves exponential  and trigonometric functions (\cref{eq:barell,eq:flattening1}), it cannot be implemented in a numerically exact manner using adaptive precision rational or integer number types. Using extended precision floating point number types (such as MPFR), the method's accuracy can, however, be increased arbitrarily. {We evaluate the effect of this choice on result accuracy in \cref{fig:precision}.} As test instances we use 1000 different random target angle prescriptions $\ve{\hat\Theta}$ (with $\hat\Theta_i\in (\pi,3\pi)$ for all vertices $v_i$) on a sphere mesh (1K vertices).

As can be observed, the remaining error that does not vanish due to numerical limitations decreases consistently as the number of bits used for the floating point computations is increased. Due to dependence on many factors (input mesh and edge lengths, target angles, choice of linear system solver for the Newton direction) a simple bound on the error cannot be given, but \cref{fig:precision} gives an empirical idea of the behavior. Note that some correlation can be observed to the conformal scale distortion (the range $[e^{\min\ve{u}},e^{\max\ve{u}}]$) that is required to match the target curvature.

\bibliographystyle{ACM-Reference-Format}
\bibliography{main.bib}

\end{document}